\newsavebox\myboxA
\newsavebox\myboxB
\newlength\mylenA
\newcommand*\xoverline[2][0.75]{%
	\sbox{\myboxA}{$\m@th#2$}%
	\setbox\myboxB\null
	\ht\myboxB=\ht\myboxA%
	\dp\myboxB=\dp\myboxA%
	\wd\myboxB=#1\wd\myboxA
	\sbox\myboxB{$\m@th\overline{\copy\myboxB}$}
	\setlength\mylenA{\the\wd\myboxA}
	\addtolength\mylenA{-\the\wd\myboxB}%
	\ifdim\wd\myboxB<\wd\myboxA%
	\rlap{\hskip 0.5\mylenA\usebox\myboxB}{\usebox\myboxA}%
	\else
	\hskip -0.5\mylenA\rlap{\usebox\myboxA}{\hskip 0.5\mylenA\usebox\myboxB}%
	\fi}
\newcommand{\removelatexerror}{\let\@latex@error\@gobble}
\newtheorem{claim}{Claim}
\newtheorem{lemma}{Lemma}
\begin{document}

\title{Design of Energy-efficient EPON: a Novel Protocol Proposal and its Performance Analysis}

\author{Sourav~Dutta ~and~ Goutam~Das
\thanks{Sourav Dutta is with the Department
of Electronics and Electrical Communication Engineering, Indian institute of technology Kharagpur,  Kharagpur,
India (e-mail: sourav.dutta.iitkgp@gmail.com).}
\thanks{Goutam Das is with G. S. Sanyal School of Telecommunication, Indian Institute of Technology Kharagpur, Kharagpur, India (e-mail: gdas@gssst.iitkgp.ernet.in).} 
}
\maketitle
\date{ \vspace{-5ex}}
\begin{abstract}
Economical and environmental concerns necessitate network engineers to focus on energy-efficient access network design. The optical network units (ONUs), being predominantly responsible   for the energy consumption of Ethernet Passive Optical Network (EPON), motivates us towards designing a novel protocol for saving energy at ONU. The proposed protocol exploits different low power modes (LPM) and opts for the suitable one using traffic prediction. This scheme provides a significant improvement of energy-efficiency especially at high load ($\sim 40\%$) over existing protocols. A better understanding of the performance and a deeper insight into several design aspects can only be addressed through a detailed mathematical analysis. The proposed protocol involves traffic prediction which infringes Markovian property. However, some pragmatic assumptions along with a proper selection of observation instances and state descriptions allow us to form a Discrete Time Markov Chain (DTMC) of the proposed algorithm. Thus, the primary objective of this paper is to propose a novel scheme for achieving energy-efficiency at ONU and mathematically analyze the performance of it with the help of a DTMC. The analysis reveals that the energy-efficiency is more sensitive to the power consumption of doze mode as compared to other LPM while the effect of sleep-to-wake-up time is minor.

\end{abstract}

\begin{IEEEkeywords}
Energy-efficiency EPON, ONU-assisted.
\end{IEEEkeywords}
\IEEEpeerreviewmaketitle
\section{Introduction}\label{sec:Intro}
The persistent enhancement of Internet traffic promotes  research on diminishing the Internet power consumption.  
Owing to the primary source ($80-90\%$) of Internet power consumption \cite{kani2013power}, research on energy-efficient Internet network design is predominantly targeted from the perspective of the access network. An Ethernet Passive Optical Network (EPON), {one of the most widely accepted and deployed access technology}, consists of an Optical Line Terminal (OLT), multiple Optical Network Units (ONUs), and Remote Nodes (RNs) \cite{ipact}. 
As ONUs are responsible for consuming the substantial fraction ($\sim 70\%$) of the overall EPON power consumption, energy-efficient ONU design has already emerged to be a well-established research field. {Energy-efficiency is achieved at ONU by switching-off some of its active components (low power mode)}. However, a certain time is required to switch-on those inactive components and this time is also known as sleep-to-wake-up time. It is widely understood that the power consumption figures of ONUs can be reduced by switching off more components whereas it increases the sleep-to-wake-up time leading to a trade-off between the power consumption and sleep-to-wake-up time. {This trade-off allows designing different low power modes: power-shedding, doze mode,  fast sleep, deep sleep, cyclic sleep \cite{internetpowerconsumption}}.  
In all those low power modes, it is essential to diminish the waste due to wake-up time in order to maximize energy-efficiency which requires elongation of the time duration over which low power mode is employed, termed as sleep duration. However, the limitation on the buffer size of ONUs or the presence of delay-sensitive traffic in an EPON network imposes an upper bound on sleep durations. Due to the trade-off between the power consumptions and the wake-up time durations of different low power modes, the enhancement of energy-efficiency calls for an efficient protocol design for the proper selection of a low power mode as a function of the calculated upper-bound of the sleep duration.   
In addition, energy-efficiency at ONUs can be further enhanced by designing advanced circuitries that reduce power consumptions \cite{energysave,wake_up_reduction1} or wake-up time durations of different low power modes \cite{wake_up_reduction,wake_up_reduction1}. Among them, it is essential to look for the parameters that deserve more attention in order to achieve maximum energy benefit.     
This requires the knowledge about the dependency of energy-efficiency on all of the concerned parameters which can solely be obtained by a rigorous mathematical analysis of the underlying protocol. Thus, after designing an efficient protocol for saving energy at ONU, the performance analysis of it is of utmost importance. In this paper, we focus on both of these aspects.
The protocols for energy-efficient ONU design in EPON are classified as OLT-assisted and ONU-assisted. In case of the OLT-assisted protocols, OLT decides sleep durations of ONUs and informs them through extra information within the GATE message \cite{ieee2010ieee}. Several OLT-assisted protocols have been proposed in \cite{OLTass1,SMA,OLTass2,delayawareoltassisted,GBA,downstreamcentric,multipowerlevel,sleepana,cyclicsleep}. 
{In \cite{OLTass1}, authors have proposed an OLT-assisted protocol, termed as Sleep Mode Aware (SMA) protocol, where OLT send GATE message to every ONU in every polling cycle like traditional Dynamic Bandwidth Allocation (DBA) \cite{ipact} protocol. However, before sending the GATE message, the OLT calculates the minimum time instant up to which the next GATE message will not be sent to that ONU and this time instant is informed to the ONU through the GATE message. During this time period, ONUs employ a power saving mode. The effect of employing different power saving modes in this time period for different technologies is demonstrated in \cite{SMA}. In SMA protocol, the Down-Stream (DS) traffic is queued and it is transmitted to an ONU only during its allocated transmission slot. Different schemes for calculating this transmission slot has been proposed in \cite{OLTass2}. In the proposed protocol of \cite{downstreamcentric}, ONUs observe the DS traffic of some $x$ number of cycles and if no packets arrive over this period then it sleeps for some $y$ number of cycles. The same authors have extended the protocol for both Down-Stream (DS) and Up-stream traffic (US) in \cite{multipowerlevel}. Authors of \cite{downstreamcentric} and \cite{multipowerlevel} have mathematically analyzed the energy-efficiency figures of their proposed schemes with the aid of Discrete Time Markov Chain (DTMC). The authors of \cite {delayawareoltassisted} have shown that a significant improvement in energy efficiency can be achieved by selecting between two predefined sleep times (one short and one long) instead of using a single long sleep time.  
  The proposed Green Bandwidth Allocation (GBA) protocol of \cite{GBA} is analyzed by modeling buffer of each ONU as an M/G/1 queue with vacation \cite{sleepana}. 
  In \cite{cyclicsleep}, the authors have unitized the advantage of both doze mode and cyclic sleep and proposed a new sleep mode. In this unified sleep mode, three states have been introduced: Aware state (ONU is fully active), Listen state (ONU is in doze mode), and Sleep state (ONU is in cyclic sleep). An ONU cyclically alters between Aware state and Sleep state and during sleep period it periodically enters into Listen state when the handshaking with the OLT is performed. The performance analysis of this sleep mode has been proposed in \cite{cyclicsleepana}.}   
  The major drawback of all OLT-assisted protocols is that they require changing the MAC of both OLT and ONUs \cite{chayan}. Thus, a prodigious amount of expenditure is required to employ an OLT-assisted protocol in a legacy network \cite{chayan}. 

The drawbacks of OLT-assisted protocols can be eradicated by ONU-assisted protocols where an ONU itself decides its own sleep duration. Few ONU-assisted protocols have also been proposed in \cite{doze,chayan,bhar2016designing}. However, to the best of our knowledge, mathematical analysis of none of the ONU-assisted protocols is present in the literature. The authors of \cite{doze} have employed only doze mode for saving energy. {However, it is well known that the power consumption figure of doze mode is quite high \cite{internetpowerconsumption}.} 
In \cite{chayan,bhar2016designing}, we have proposed a novel protocol where ONUs dynamically swerve between the active mode and different low power modes where ONUs  select the most suitable low power mode based on the sleep duration (time interval over which low power mode is employed) before entering into a low power mode. 
 Once an ONU enters into a low power mode, a certain time interval is required to wake-up from the low power mode and to initiate the US \cite{ipact} transmission. {Traffic arrival over this time period is predicted by Autoregressive-Moving Average (ARMA) model in order to minimize the packet drop probability.}
In all existing ONU-assisted protocols, an ONU remains in a low power mode for a certain time interval and then wakes-up from it and transmits US data for multiple cycles before entering into a low power mode again. After waking up from sleep, energy-efficient ONUs behave like traditional ONUs ({{i.e.}} both transmitter and receiver are active) till it sleeps once again and no energy saving is achieved over this time period (active period).   
  It is well known that,     
  in Multi-Point-Control-Protocol (MPCP) \cite{ieee2010ieee},  standardized for realizing statistical multiplexing in EPON, the transmitter of an ONU remains idle during other ONUs US transmissions. This creates an opportunity to employ the doze mode during these periods without losing any possibility of transmission or reception. This process provides a significant improvement in energy-efficiency, especially at high load scenarios when ONUs are needed to be active in all cycles to avoid packet drop. However, it is essential for ONUs to  initiate the wake-up process from the doze mode at least $T_{sw}^{dz}$ ($T_{sw}^{dz}$- sleep-to-wake-up time from doze mode) duration before the beginning of its allocated US transmission slot and hence, an ONU must know its transmission slot {at least} $T_{sw}^{dz}$ beforehand. In traditional MPCP, this provision is absent as OLT informs an ONU about its transmission slot through GATE message and the slot stats immediately after receiving its GATE message.
  Thus, it appears to be impossible to achieve this mechanism in a complete ONU-assisted manner.  
 
 In this paper, we prove that this process can easily be realized in a complete ONU-assisted manner by modifying the traditional ranging process \cite{ieee2010ieee}. By adjoining our proposed mechanism of employing the doze mode during active periods  with  our previous proposal \cite{chayan,bhar2016designing}, in this paper, we propose a novel ONU-assisted sleep mode protocol for energy-efficient ONU design in EPON (OSMP-EO) where all necessary modifications are provided. It can be noted that modifications are required only for US traffic since receivers remain active during the doze mode. We demonstrate that a significant energy benefit is achieved by employing OSMP-EO especially at high load ($\sim 40\%$) as compared to other existing ONU-assisted protocols.   

In this paper, we also analyze the proposed OSMP-EO protocol. The presence of traffic prediction in OSMP-EO infringes the memoryless property making the analysis a hard problem to solve. Some realistic assumptions along with an intelligent selection of discrete observation points and state descriptions aid to eliminate the dependency on the past and allow to formulate a DTMC of OSMP-EO for analyzing it. The analysis reveals that if the doze mode power consumption can be reduced by efficient circuitry design, a drastic enhancement in energy-efficiency figures of OSMP-EO is achieved. On the other hand, the energy-efficiency figures of OSMP-EO are less sensitive to the power consumption of the deep sleep and the fast sleep. Further, the decrement of  sleep-to-wake-up times  has a minor impact on energy-efficiency.    

{The rest of the paper is organized as follows. In {Section} II, the proposed OSMP-EO algorithm is described. The Markov model, used for analyzing the proposed OSMP-EO algorithm, is formulated in Section III. {State transition probabilities and state probabilities are calculated in Section IV and Section V respectively.} The energy-efficiency figure of OSMP-EO is analyzed mathematically in {Section} VI. Model validation and few design insights are described in {Section} VII. In {Section} VIII, we provide concluding statements.}
\section{Protocol description}
In this section, we first propose a simple mechanism for exploiting the doze mode during active periods in a complete ONU-assisted manner. This is followed by a detail discussion on the proposed OSMP-EO protocol. 
\begin{figure*}[t]
	\centering
	\includegraphics[scale=0.51]{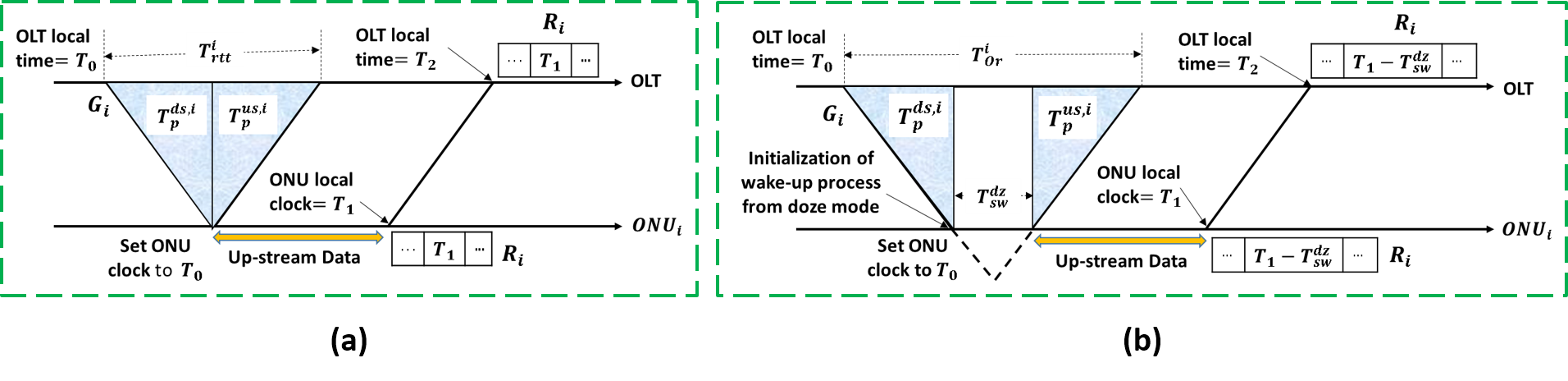}
	\caption{(a) Measurement of round-trip time in MPCP (b) mechanism employed for exporting the doze mode in active periods.}
	\label{measure}
\end{figure*}    
\subsection{Mechanism to employ doze mode during active period (MDA)}\label{doz}
As discussed before (refer {Section} \ref{sec:Intro}), in all existing ONU-assisted protocols, no low power mode is employed during active periods of ONUs. The transmitter of an ONU being idle during US transmission of other ONUs, an opportunity of saving energy during the active periods by employing doze mode is created. 
 However, to facilitate this, an ONU must ensure that OLT transmits its GATE message at least $T_{sw}^{dz}$ duration beforehand so that it gets enough time to wake-up from doze mode and initiate its US transmission exactly at the beginning of its allocated transmission slot. We now explain that this can easily be achieved by an ONU by modifying the ranging process employed at the OLT to measure the round-trip time ($T_{rtt}^i$- round-trip time of $ONU_i$) of ONUs. In general, the following steps  are used to measure $T_{rtt}^i$ at the OLT \cite{ieee2010ieee}. At the time of sending the GATE message for $ONU_i$, denoted by $G_i$, the OLT sets the time-stamp field of $G_i$ by its local time (say $T_0$).  $ONU_i$ resets its clock to $T_0$ immediately after receiving $G_i$. When $ONU_i$ sends the REPORT message ($R_i$), the time-tamp field of $R_i$ is set to its local time (say $T_1$). After receiving $R_i$ (say at $T_2$), the OLT calculates $T_{rtt}^i$ by (refer Fig. \ref{measure}(a)): 
 \begin{equation}\label{rtt}
 T_{rtt}^i=T_{pn}^{ds,i}+T_{pn}^{us,i}=(T_2-T_0)-(T_1-T_0)=T_2-T_1
 \end{equation}
  Instead of setting the time-stamp field by $T_1$, if $ONU_i$  sets it by $T_1-T_{sw}^{dz}$ then the OLT measures round-trip time of $ONU_i$ as; $T_{OLT}^i=T_{rtt}^i+T_{sw}^{dz}$ (refer Fig. \ref{measure}(b)). While scheduling $ONU_i$, OLT considers round-trip time of $ONU_i$ as $T_{OLT}^i$, whereas its actual value is $T_{OLT}^i-T_{sw}^{dz}$. Thus, $ONU_i$ receives the GATE message exactly $T_{sw}^{dz}$ duration before its US transmission slot while the OLT remains oblivious of this fact. We now calculate the average power consumption of an ONU during the active periods when MDA is employed.
\subsubsection*{\textbf{Calculation of the average power consumption during active states}}
Let us denote the duration of the $k^{th}$ cycle and the US transmission slot of $ONU_i$ as $T_c^k$ and $T_s^{i,k}$ respectively. In our proposed mechanism, an ONU starts waking up from doze mode exactly  $T_{sw}^{dz}$ duration before the initialization of its allocated transmission slot and again enters into doze mode immediately at the end of it. Thus, $ONU_i$ remains in active mode in the $k^{th}$ cycle for  $T_s^{i,k}+T_{sw}^{dz}$ duration, when it consumes $P_{on}$ amount of power. For the rest of the $k^{th}$ cycle ({{i.e.}} $T_c^k-T_s^{i,k}-T_{sw}^{dz}$), it is in doze mode and the power consumption is $P_{dz}$. Hence, the average power consumption of $ONU_i$  during the active mode ($P_{on}^{avg,i}$) is given by eq. (\ref{Pon}).
\begin{align}\label{Pon}
P_{on}^{avg,i}=\dfrac{\mathbb{E}_{k}\big[(T_s^{i,k}+T_{sw}^{dz})P_{on}+(T_c^k-T_s^{i,k}-T_{sw}^{dz})P_{dz}\big]}{\mathbb{E}_k[T_c^k]}
\end{align}
where $\mathbb{E}_k$ denotes expectation over $k$. Let the average arrival rate of $ONU_i$ and the bandwidth of the feeder fiber be $\lambda_i$ and $\lambda_d$ respectively. The average duration that is occupied by the US data of $ONU_i$ in a cycle is $\lambda_iT_c^{avg}/\lambda_d$ where $T_c^{avg}=\mathbb{E}_k[T_c^k]$. In MPCP, each transmission slot includes a REPORT message of duration $T_R$ and guard duration $T_G$. Thus, $\mathbb{E}_{k}[T_s^{i,k}]=\dfrac{\lambda_i}{\lambda_d}T_c^{avg}+T_R+T_G$.
Substituting the values of $E_{k}[T_{s}^{i,k}]$ in eq. (\ref{Pon}), we get:
\begin{equation}\label{avgac}
P_{on}^{avg,i}=P_{dz}+\Big(\frac{\lambda_i}{\lambda_d}+\frac{T_R+T_G+T_{sw}^{dz}}{T_c^{avg}}\Big)(P_{on}-P_{dz})
\end{equation}
Now, we describe our proposed OSMP-EO protocol in detail.
\begin{table*}[t]
	\caption{{Definition of notations}}
	\begin{center}
		\begin{tabular}{|c|p{16cm}|}
			\hline
			\multicolumn{1}{|c|}{\textbf{Notation}} & 
			\multicolumn{1}{c|}{\textbf{Description}}\\ \hline
			
			$N_m^i$ & Maximum granted bandwidth of $ONU_i$ in one cycle (Packets)\\
			$P_{s}$ & Power consumption of ONUs when they are in mode $s\in \{ds,fs,dz,on\}$ (Watts)\\
			$T_m$ & Time interval between two consecutive decision instances when ONUs are in sleep mode (seconds)\\
			$T_{sw}^{S_m}$ & Sleep-to-wake-up time of sleep mode $S_m\in \{ds,fs,dz\}$ (seconds)\\
			$T_{bf}^i$ & Predicted buffer fill-up time of $ONU_i$ at the current observation instant (seconds)\\
			$T_{bf,n}^i$ & Predicted buffer fill-up time of $ONU_i$ at the next observation instant (seconds)\\
			$T_{lb}^{S_m}$ & Minimum buffer fill-up time to enter sleep mode $S_m$ (seconds)\\ 
			$T_{no}$ & Time interval between the current and the next observation instants (seconds)\\
			$T_{nd}$ & Time interval between the current and the next decision instants (seconds)\\
			$T_{pc}$ & Time interval over which prediction is performed at the current observation instant (seconds)\\
			$T_{pn}$ & Time interval over which prediction is performed at the next observation instant (seconds)\\
			$S_c^i$ & Mode of $ONU_i$ in the current observation instant ($S_c^i\in\{ds,fs,on\}$)\\
			$S_p^i$ & Mode of $ONU_i$ at the previous observation instant\\
			$S_n^i$ & Mode of $ONU_i$ at the next observation instant\\
			$S_{nx}^i$ & Mode of $ONU_i$ at the next decision instant\\
			$N_{sz}^i$ & Buffer size of $ONU_i$ (Packets)\\
			$N_{th}^i$ & Buffer threshold of $ONU_i$ (Packets)\\
			$b^i_c$ & Current buffer state of $ONU_i$ (Packets)\\
			$b^i_n$ & Buffer state of $ONU_i$ at the next observation instant (Packets)\\
			$A^{i,T}$ & Number of packet arrivals to $ONU_i$ over the time interval $T$ (Packets)\\
			$A_p^{i,T}$ & Number of packet that is predicted to arrive at $ONU_i$ over the time interval $T$ (Packets)\\
			{$T_{th}^i$} & {Time requires to up-stream $N_{th}^i$ number of packets (seconds)}\\
			{$T_{mw}^{S_m}$} & {The minimum value of the time interval between the time instants when the buffer fills up and the ONU initiates the wake-up process from sleep mode $S_m$ which is required to eliminate the possibility of packet drop.}\\
			\hline 
		\end{tabular}
	\end{center}
\end{table*} 
\subsection{OSMP-EO} \label{subsubsection:osmp}
Here, we describe our proposed OSMP-EO protocol for delay insensitive traffic where the buffers of ONUs have limitations. The protocol can be easily extended for delay sensitive traffic as well. In OSMP-EO, ONUs dynamically alter between three modes: deep sleep ($ds$), fast sleep ($fs$), and active mode ($on$). During active mode, ONUs employ MDA (refer {Section} \ref{doz}).  We denote the mode of $ONU_i$ at the current and next decision instants (say $t$ and $t_1$ respectively) by $S_c^i$ and $S_{nx}^i$ respectively. Further, the time interval between the current and the next decision instants is denoted by $T_{nd}$ ({{i.e.}} $t_1=t+T_{nd}$). Now, we describe the value of $T_{nd}$ and the rules that are followed by $ONU_i$ to decide its mode at time $t_1$ ({{i.e.}} $S_{nx}^i$) with the information of  the buffer fill-up time, predicted at $t_1$ ($T_{bf,n}^i$), for all values of $S_c^i$.
\subsubsection*{\textbf{Case} $\mathbf{S_c^i=S_m\in \{ds,fs\}}$}
 In case of $S_c^i=S_m$, the next decision is taken after a fixed time interval $T_m$ ({i.e.} $T_{nd}=T_m$) when $ONU_i$ decides whether to retain the same sleep mode $S_m$ ({i.e.} $S_{nx}^i=S_m$) or wake up from it ({i.e.} $S_{nx}^i=on$), same as our previous proposal.
Thus, if $ONU_i$ decides $S_{nx}^i=S_m$, it must remain in $S_m$ for $T_m$ duration and then only the wake-up process can be initiated which takes $T_{sw}^{S_m}$ duration ($T_{sw}^{S_m}$- Sleep-to-wake-up time for sleep mode $S_m$). The OSMP-EO being an ONU-assisted protocol, in the worst possible case, an ONU may wake-up from sleep mode immediately after the arrival of its GATE message requiring a complete cycle to receive the next one when the desired bandwidth will be reported. The US transmission can only be initiated after receiving the next GATE message requiring another complete cycle. Thus, $ONU_i$ requires at most $2T_{cm}$ ($T_{cm}$- Maximum cycle time) duration to initiate the US transmission. 
 {Therefore, if $T_{mw}^{S_m}$ denotes the minimum value of the time interval between the time instants when the buffer fills up and the ONU initiates the wake-up process from sleep mode $S_m$ which is required to eliminate the possibility of packet drop, then $T_{mw}^{S_m}=T_{sw}^{S_m}+2T_{cm}+T_m$. }
 The rules for deciding $S_{nx}^i$ are the following:   
 \begin{itemize}
 	\item If $T_{bf,n}^i>T_{mw}^{S_m}$ then  $S_{nx}^i=S_m$ 
 	\item  Otherwise, it initiates the wake-up process from sleep mode $S_m$ ($S_{nx}^i=on$) 
 \end{itemize}
The OSMP-EO concentrates only on the worst case scenario in order to avoid the packet-drop possibility. However, in practical scenarios, the prediction error may cause a non-zero probability of packet-drop. This can easily be managed by setting a threshold ($N_{th}^i$), smaller than the actual buffer-size ($N_{sz}^i$). Hereafter, buffer-fill up time indicates the time taken to fill-up $N_{th}^i$ packets.
\subsubsection*{\textbf{Case} $\mathbf{S_c^i=on}$}
If $S_c^i=on$ then in OSMP-EO, $T_{nd}$ depends on the mode of $ONU_i$ in the previous decision instant, denoted as $S_{pr}^i$. If $S_{pr}^i=S_m\in\{ds,fs\}$ then the next decision is taken after up-streaming $N_{th}^i$ number of packets. On the other hand, in case of $S_{pr}^i=on$, the next decision is taken after up-streaming all packets that are currently stored in the buffer, denoted by $b_c^i$, which is same as our previous proposal \cite{chayan}.
We show in {Section} \ref{ssec:my} that due to the trade-off between the sleep-to-wake-up time and power consumption \cite{internetpowerconsumption}, a sleep mode provides improvement in energy-efficiency as compared to other modes with higher power consumption figures if $T_{bf,n}^i$ is longer than a threshold. This threshold for $ds$ and $fs$ is denoted as $T_{lb}^{ds}$ and $T_{lb}^{fs}$. The following conditions are followed for deciding $S_{nx}^i$: 
\begin{itemize}
	\item If $T_{bf,n}^i>T_{lb}^{ds}$ then select the  mode in the next decision point as deep sleep ($S_{nx}^i=ds$).
	\item If $T_{lb}^{fs}\leq T_{bf,n}^i<T_{lb}^{ds}$ then select the next mode as fast sleep ($S_{nx}^i=fs$).
	\item Otherwise, the active mode is retained ($S_{nx}^i=on$).
\end{itemize}

Now, we illustrate the OSMP-EO protocol with the aid of Fig. \ref{threshold}(a). Let, at an arbitrary decision time $t$, $ONU_i$ is in sleep mode $S_m$. Thus, the next decision is taken after $T_m$ duration by following the rules, discussed for the case: $S_c^i=S_m$. Let, at time $t_1$ ($=t+T_m$), the mode is decided as $on$. {Since, in the previous decision point ({i.e.} at $t$) mode was $S_m$, the next decision is taken after up-streaming $N_{th}^i$ number of packets which requires $T_{th}^i$ duration.} The mode, at $t_2$ ($=t_1+T_{th}^i$), is decided by following the rules, mentioned for the case of $S_{c}^i=on$. Let, at $t_2$, mode is again decided as $on$. As, in this case, $S_{pr}^i=on$, the mode will be decided after up-streaming $b_c^i$ number of packets (requires $T_{b_c^i}$ duration) by following the rules same as $t_1$.
\begin{figure*}[t]
	\centering
	\includegraphics[scale=0.49]{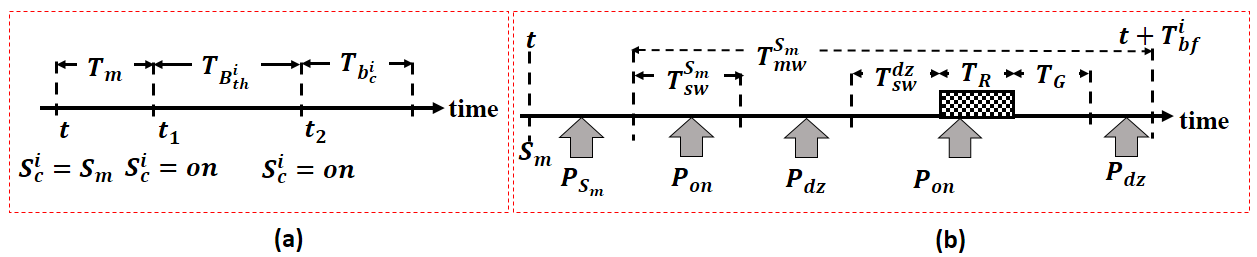}
	\caption{Illustration of (a) OSMP-EO protocol (b) energy consumption during sleep mode $S_m$. }
	\label{threshold}
\end{figure*} 
\subsection{Calculation of $T_{lb}^{ds}$ and $T_{lb}^{fs}$}\label{ssec:my}
Let, at the current decision instant, the buffer fill-up time is predicted as $T_{bf}^i$. We now explain  the energy consumption over this $T_{bf}^i$ time interval with the help of Fig. \ref{threshold}(b) if $ONU_i$ decides its mode as sleep mode $S_m$. In OSMP-EO, $ONU_i$ wakes up from sleep mode $S_m$,  $T_{sw}^{S_m}+2T_{cm}+T_m=T_{mw}^{S_m}$ duration before the buffer gets filled up. If the buffer fill-up time is correctly predicted then $ONU_i$ remains in sleep mode $S_m$ for $T_{bf}^i-T_{mw}^{S_m}$ duration when it consumes $P_{S_m}$ amount of power. Thereafter, $ONU_i$ requires $T_{sw}^{S_m}$ duration to wake-up from sleep mode $S_m$ when the power consumption is $P_{on}$. During the rest of the period ({i.e.} $2T_{cm}+T_m$) only one REPORT message will be transmitted. Since, in OSMP-EO, ONUs enter into doze mode immediately after waking up from sleep mode $S_m$, $ONU_i$ needs to wake-up from doze mode before sending the REPORT which takes $T_{sw}^{dz}$ duration. Thus, the power consumption over $T_{sw}^{dz}+T_R+T_G$ time interval is $P_{on}$ and during the remaining period power consumption is $P_{dz}$. Thus, the energy consumption of $ONU_i$ during sleep mode $S_m$ ($E_{S_m}$) is:
\begin{align*}
E_{S_m}=T_{bf}^iP_{S_m}+T_{sw}^{S_m}(P_{on}-P_{S_m})+(2T_{cm}+T_m)\\\times(P_{dz}-P_{S_m})+(T_R+T_G+T_{sw}^{dz})(P_{on}-P_{dz})
\end{align*}
 The achieved energy-efficiency of $ds$, is better than  $fs$ if $E_{ds}>E_{fs}$ and it requires:
\begin{align}\label{dsv}
\nonumber T_{bf}^{i}\geq\frac{T_{sw}^{fs}P_{fs}-T_{sw}^{ds}P_{ds}+(T_{sw}^{ds}-T_{sw}^{fs})P_{on}}{P_{fs}-P_{ds}}\\+2T_{cm}+T_m=T_{lb}^{ds}
 \end{align}
If $ONU_i$ remain active then energy consumption is $E_{on}=T_{bf}^iP_{on}^{avg,i}$. The fast sleep provides better energy savings as compared to active mode if $E_{fs}>E_{on}$ which requires:
\begin{align}\label{ls}
 \nonumber T_{bf}^{i}\geq\dfrac{T_{sw}^{fs}(P_{on}-P_{fs})+(2T_{cm}+T_m)(P_{dz}-P_{fs})}{P_{on}^{avg,i}-P_{fs}}\\+\dfrac{(T_R+T_G+T_{sw}^{dz})(P_{on}-P_{dz})}{P_{on}^{avg,i}-P_{fs}}=T_{lb}^{fs}
\end{align}
{ Next, we prove that energy-efficiency of the OSMP-EO protocol can be analyzed with the help of a Discrete Time Markov Chain (DTMC).} 
\section{Model formulation and description}\label{ssec:stated}
In this section, we formulate a Markov model of the proposed OSMP-EO protocol which is used for analyzing the achieved energy-efficiency figure. 
The OSMP-EO protocol, being an ONU-assisted protocol, the Markov analysis can be performed for each ONU separately. Without loss of generality, in this paper, we formulate the Markov model for $ONU_i$. Since, in this paper, we are focused on analyzing the performance of the OSMP-EO protocol, the grant-sizing scheme is taken to be the simplest one: the fixed grant-sizing scheme \cite{ipact}. We also explain how this analysis can be extended for other grant-sizing schemes (refer section \ref{ssec:statetr}). The following assumptions are taken in this analysis. 
\begin{itemize}
	\item The traffic arrival to $ONU_i$ follows Poisson process with mean $\lambda$ which has already been assumed in all existing queuing analysis in this area (EPON) \cite{downstreamcentric,multipowerlevel,sleepana,m1,m2,m3,m4}.
	\item The predicted buffer fill-up time always replicates the actual values. Thus, our analysis provides the best possible performance and it is independent of traffic prediction mechanism.  As the prediction mechanism is improved, the actual performance of the OSMP-EO protocol asymptotically follows our analytical results.  
\end{itemize}     
 
With these assumptions, we now try to formulate a Discrete Time Markov Model (DTMC) of OSMP-EO.
Let us define the mode of $ONU_i$ in the previous, current and the next observation points as $S_p^i$, $S_c^i$ and $S_n^i$. If the observation points are identical to the decision points then $S_{p}^i=S_{pr}^i$ and $S_{n}^i=S_{nx}^i$; otherwise they are different. Since, in each decision point (refer \ref{subsubsection:osmp}), $ONU_i$ decides its mode, an obvious choice for observation points of the DTMC are the decision instants. With this observation instant, we first try to find a state description that obeys the Memoryless property \cite{brts}, termed as valid state description. Firstly, we claim that both $S_c^i$ and $b_c^i$ must be included in the state description. 
\begin{claim}\thlabel{lemma 5}
	 The state description of DTMC must comprise of $S_c^i$ and $b_c^i$.
	\end{claim}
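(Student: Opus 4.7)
The plan is to prove necessity by contradiction: assume a valid state description $\mathcal{S}$ exists that omits either $S_c^i$ or $b_c^i$, and exhibit two histories consistent with the same value of $\mathcal{S}$ but producing different one-step transition laws. This would violate the memoryless property and force the missing variable into $\mathcal{S}$.

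First I would show that $S_c^i$ is indispensable. The length $T_{nd}$ of the interval to the next decision instant is dictated by $S_c^i$ via the rules of Section \ref{subsubsection:osmp}: if $S_c^i\in\{ds,fs\}$ then $T_{nd}=T_m$, while if $S_c^i=on$ then $T_{nd}$ equals either $T_{th}^i$ or $T_{b_c^i}$. Consequently, the number of packet arrivals over $[t,t_1)$, which is Poisson with mean $\lambda T_{nd}$, has a distribution that changes with $S_c^i$; moreover, the decision map $T_{bf,n}^i\mapsto S_{nx}^i$ uses different thresholds depending on whether $S_c^i$ is a sleep mode ($T_{mw}^{S_m}$) or the active mode ($T_{lb}^{ds},T_{lb}^{fs}$). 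Exhibiting two past trajectories that agree on every other candidate state variable but differ in $S_c^i$ yields different transition kernels, so $S_c^i$ must be in $\mathcal{S}$.

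Next I would show that $b_c^i$ is indispensable. Conditioned on $S_c^i$, the next observation-instant mode $S_n^i$ is a deterministic function of $T_{bf,n}^i$, and $T_{bf,n}^i$ depends on how many additional packets are needed to reach $N_{th}^i$ starting from $b_c^i + A^{i,T_{no}}$. Since the Poisson arrival process is stationary and independent of the past, the joint law of $(S_n^i,b_n^i)$ given the state is completely pinned down by the pair $(S_c^i,b_c^i)$, but any two trajectories that share $S_c^i$ while having different $b_c^i$ induce different predicted fill-up times and hence different distributions for $S_n^i$ and $b_n^i$. Thus omitting $b_c^i$ again breaks the Markov property.

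The step I expect to require the most care is handling the active-mode case $S_c^i=on$ cleanly, because there $T_{nd}$ itself is a (known) function of $b_c^i$ (via $T_{b_c^i}$ when $S_{pr}^i=on$), so the argument must be presented so that $b_c^i$'s role is not conflated with that of $S_{pr}^i$; I will stress that even when $S_{pr}^i$ is fixed, varying $b_c^i$ alone already changes the transition probabilities, thereby isolating the necessity of $b_c^i$. The reverse direction, that $(S_c^i,b_c^i)$ together suffice, is not required here (it is addressed later when the full state description is finalized); the claim only asserts that these two components must be present.
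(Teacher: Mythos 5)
Your proposal is correct, but it justifies the inclusion of $S_c^i$ by a different argument than the paper does. For $b_c^i$ the two proofs coincide: both observe that the next-mode decision reduces to comparing $N_{th}^i$ with $b_n^i+A_p^{i,T}$, that $b_n^i=b_c^i+A^{i,T_{no}}$, and hence that the one-step law cannot be written down without $b_c^i$. For $S_c^i$, however, the paper does not argue through the transition kernel at all: it notes that the energy consumptions of $ds$, $fs$ and $on$ differ, so the performance metric itself requires the stationary probabilities of the modes, which forces $S_c^i$ into the state description regardless of whether the dynamics could be made Markovian without it. Your route instead shows that $S_c^i$ determines $T_{nd}$ (hence the arrival distribution over the inter-observation interval) and the decision thresholds ($T_{mw}^{S_m}$ versus $T_{lb}^{ds},T_{lb}^{fs}$), so two histories differing only in $S_c^i$ have different transition laws. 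Both arguments are valid; the paper's is arguably the more robust one for $S_c^i$, since it survives even a hypothetical lumping of modes that preserved the Markov property, while yours has the virtue of treating both components uniformly through the memoryless property. One caution on your framing: the contradiction setup of "a state description that omits $S_c^i$ or $b_c^i$" implicitly assumes the candidate descriptions are subsets of a fixed variable list; a description could in principle encode the same information under another name. This looseness is shared by the paper's own informal treatment (and is partly why the paper separately establishes, in its second claim, that no description at the decision instants works), so it is not a gap you need to repair, but it is worth stating the claim as "any valid state description must determine $S_c^i$ and $b_c^i$" rather than "must literally contain them."
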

\begin{proof}
We know the energy consumption figures for different modes ($\{ds,fs,on\}$) are dissimilar. Thus, evaluation of energy-efficiency requires calculating probabilities of being in all modes, needing $S_c^i$ to be included into the state description. If $S_c^i$ individually hold the memoryless property then mode of $ONU_i$ of the next observation instant {i.e.} $S_{n}^i$ can be decided only by $S_c^i$. However, we now show that the decision about $S_n^i$ requires $b_c^i$ for the case of $S_c^i=S_m\in\{ds,fs\}$, which proves our claim. In case of $S_c^i=S_m$, $ONU_i$ decide $S_n^i$  by comparing $T_{bf,n}^i$ with $T_{mw}^{S_m}$ (refer {Section} \ref{subsubsection:osmp}).
$S_n^i$ is decided as $S_m$ if $T_{mw}^{S_m}<T_{bf,n}^i$ or in other words, $N_{th}^i>b_n^i+A_p^{i,T_{mw}^{S_m}}$ where $A_p^{i,T}$ is a random process that denotes the number of packets, predicted to arrive to $ONU_i$ over the time interval $T$. 
Thus, $S_n^i$ depends on $b_n^i$, and $b_n^i$ can be calculated as: $b_n^i=b_c^i+A^{i,T_{no}}$ where $T_{no}$ and $A^{i,T_{no}}$ denotes the time interval between the current and next observation instants and the traffic arrival over the time interval $T_{no}$ respectively. Hence, $S_n^i$ is dependent on $b_c^i$ necessitating $b_c^i$ to be included in the state description.
\end{proof}
 \thref{lemma 5} proves the necessity  of including both $S_c^i$ and $b_c^i$ into the state description. Our next step would then be to verify the sufficiency condition that is whether $S_c^i$ and $b_c^i$ jointly ($\{S_c^i,b_c^i\}$) hold the Markovian property so that $\{S_c^i,b_c^i\}$  qualifies as a valid state description for the DTMC. 
  Our assumption of ideal traffic prediction causes the predicted arrival process exactly same as actual one. In case of $S_c^i=S_m$, as $A^{i,T_{mw}^{S_m}}$ follows Poisson process, $A_p^{i,T_{mw}^{S_m}}$ is also Poisson distributed with mean $\lambda T_{mw}^{S_m}$. Since $T_{mw}^{S_m}$ depends only on $S_c^i=S_m$, $S_n^i$ dependents only on $S_c^i$ and $b_n^i$ (refer \thref{lemma 5}). Further, in this case, we know that $T_{no}=T_m$ (refer Section \ref{subsubsection:osmp}) and hence, $b_n^i$ can be calculated only by the information of $b_c^i$. Thus, $\{S_c^i, b_c^i\}$ seems to be a valid state description of the DTMC for the case $S_c^i=S_m$. Similar argument can further be provided for the case of $S_c^i=on$ as well. Thus, at the first glance, $\{S_c^i, b_c^i\}$ appears to be qualified as a valid state description with observation instants  exactly same as decision instants. However, a closer look into the protocol proves that $\{S_c^i, b_c^i\}$ violates Memoryless property. Not only that, we now claim that if the observation instances are identical to decision points then there doesn't exist any set of parameters at the current decision point that summarizes the complete history for the next instant which proves the non-Markovian nature of the model. 
  \begin{claim}\thlabel{cl2}
  	The modeling of OSMP-EO by choosing observation points identical to the decision instances is non-Markovian.
  \end{claim}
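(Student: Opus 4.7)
My plan is a proof by contradiction. Suppose that a finite-dimensional state descriptor $V_k = (S_c^i, b_c^i, \Xi_k)$ at the decision instant $t_k$ renders $\{V_k\}$ a DTMC, where $\Xi_k$ denotes any finite collection of parameters measurable at $t_k$. Since $S_c^i$ and $b_c^i$ are forced by \thref{lemma 5}, it suffices to show that no choice of $\Xi_k$ can make the chain memoryless, and I would do this in two stages.

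In the first stage I would rule out $\Xi_k = \emptyset$, i.e., the bare pair $(S_c^i, b_c^i)$. The protocol in Section \ref{subsubsection:osmp} sets the inter-decision interval $T_{nd}$, at a decision instant with $S_c^i = on$, either to the time required to upstream $N_{th}^i$ packets (when $S_{pr}^i \in \{ds, fs\}$) or to the time required to upstream $b_c^i$ packets (when $S_{pr}^i = on$). Since these two values of $T_{nd}$ generally differ, the conditional distribution of the next-step buffer $b_n^i$, and through it of $V_{k+1}$, depends on $S_{pr}^i$, which the pair $(S_c^i, b_c^i)$ does not record. This immediately violates the memoryless property.

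The central stage is to rule out any nontrivial $\Xi_k$. My plan is to apply the same reasoning recursively: the obvious augmentation $\Xi_k \supseteq \{S_{pr}^i\}$ still leaves the kernel undetermined, because when $S_c^i = on$ and $S_{pr}^i = S_m$ the interval $T_{nd}$ spans a wake-up transient and the upstreaming of $N_{th}^i$ packets over multiple MPCP cycles whose alignment at $t_k$ is itself governed by earlier sleep and active epochs. I would then construct two ancestor sample paths that share $(S_c^i, b_c^i, \Xi_k)$ up to any prescribed finite depth yet induce different conditional laws for $V_{k+1}$, by selecting paths whose cumulative sojourns in the prior sleep mode differ in a way that keeps $b_c^i$ fixed but shifts the cycle alignment at $t_k$. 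The principal obstacle I anticipate is formalizing this construction under the Poisson arrival assumption, since each arrival interval is individually memoryless; the resolution is to observe that equality of $V_k$ across two histories does not imply equality of absolute clocks at $t_k$, and this residual clock mismatch leaks into the distribution of arrivals over the subsequent $T_{nd}$. Carrying this out would show that no finite $\Xi_k$ suffices, forcing the authors' subsequent choice of observation instants distinct from the decision instants.
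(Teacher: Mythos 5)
There is a genuine gap: your proposal misses the mechanism that actually destroys the Markov property, and the substitute mechanism you build the central stage around does not work. The memory in this model does not come from cycle alignment or a ``residual clock mismatch'': arrivals are homogeneous Poisson, so the number of arrivals over the next inter-decision interval depends only on that interval's length, never on the absolute phase of the MPCP cycle at $t_k$. Two ancestor paths that agree on $(S_c^i,b_c^i,\Xi_k)$ but differ in cycle alignment therefore induce the \emph{same} conditional law for the subsequent arrivals, and your construction collapses exactly at the obstacle you yourself flag. The real source of memory is the ideal-prediction assumption combined with overlapping prediction horizons: when the ONU switches from $on$ to $S_m$ at time $t$, the enabling condition $N_{th}^i> b_c^i+A_p^{i,T_{lb}^{S_m}}$ is, under ideal prediction, a statement about the \emph{actual} arrivals over $[t,\,t+T_{lb}^{S_m}]$, an interval that extends beyond several future decision instants. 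Since $T_{lb}^{S_m}>T_{mw}^{S_m}+T_m$, the test applied at $t+T_m$ (comparing $N_{th}^i$ with $b_c^i+A_p^{i,T_{mw}^{S_m}}$) is automatically satisfied, and by iteration the ONU is forced to remain in $S_m$ for the next $N_{po}=\big\lfloor (T_{lb}^{S_m}-T_{mw}^{S_m})/T_m\big\rfloor$ decision instants. Hence the mode at a decision instant is determined by the modes at the previous $N_{po}$ instants---information that $(S_c^i,b_c^i)$, and even $(S_p^i,S_c^i,b_c^i)$, does not carry when observations coincide with decisions. This overlap argument is the entire content of the claim and is absent from your proposal.

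Two secondary points. Your first stage (that $T_{nd}$ depends on $S_{pr}^i$ when $S_c^i=on$) is correct but only establishes that $S_p^i$ must be appended to the state, which the paper does anyway; it does not exhibit the multi-step dependence that forces a change of observation instants. Your second stage also overreaches: you aim to prove that \emph{no} finite augmentation $\Xi_k$ can work, but the dependence identified above has bounded depth $N_{po}$, so recording the last $N_{po}$ modes would in principle Markovianize the chain even at the decision instants. What the claim needs, and what the paper proves, is only that the natural state descriptions fail when observation points coincide with decision points---which is precisely what motivates relocating the observation instants rather than inflating the state.
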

  \begin{proof}
  	At current time $t$, prediction is performed over $T_{pc}$ duration and next decision is taken after $T_{no}$ duration. Thus, there always exists an overlap region between $T_{pc}$ and $T_{no}$.
  	Owing to the assumption of ideal traffic prediction, $A_{p}^{i,T_{pc}}$ imposes an additional condition on deciding the next state ({i.e.} both $b_n^i$ and $S_n^i$). 
  	In OSMP-EO, if $S_p^i=S_m$ then $T_{pc}=T_{mw}^{S_m}$ while if $S_p^i=on$ then $T_{pc}=T_{lb}^{ds}$ or  $T_{lb}^{fs}$, depending on $S_c^i$ (refer Section \ref{subsubsection:osmp}) and therefore, $T_{pc}$ depends  on $S_p^i$. Hence, mode of $ONU_i$ in the next observation point ($S_n^i$) depends not only on $S_c^i$ but also on $S_p^i$. 
  	  \begin{figure*}[t]
  		\centering
  		\includegraphics[scale=0.48]{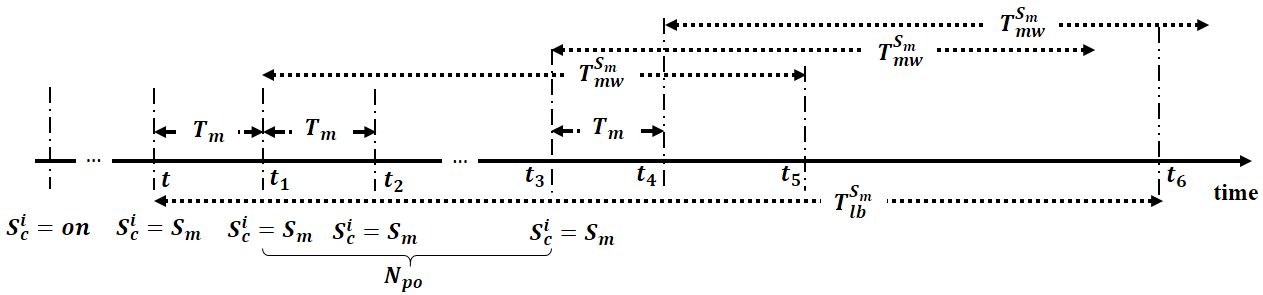}
  		\caption{Illustration of \thref{cl2}. }
  		\label{illast}
  	\end{figure*}
  	We now prove with the aid of Fig. \ref{illast} that $S_n^i$ depends not only on $S_p^i$ but also on the mode of multiple previous decision instants. Let, at time $t$, $ONU_i$ decides to change its state from active mode ($on$) to sleep mode $S_m$ (refer Fig. \ref{illast}). This requires satisfying the condition $N_{th}^i>b^i_c+A_p^{i,T_{lb}^{S_m}}$ (refer Section \ref{subsubsection:osmp}). Owing to the assumption of ideal traffic prediction, buffer of $ONU_i$ contains lesser than $N_{th}^i$ number of packets till $t_6$($=t+T_{lb}^{S_m}$). Since, at $t$, mode of $ONU_i$ is $S_m$, the next decision is taken after $T_{no}=T_m$ duration ({i.e.} at $t_1=t+T_m$)
  	 by comparing $N_{th}^i$  with $b^i_c+A_p^{i,T_{mw}^{S_m}}$ (refer {Section} \ref{subsubsection:osmp}). 
  	  As $T_{lb}^{S_m}>T_{mw}^{S_m}+T_m$ (refer eq. (\ref{dsv}) and eq. (\ref{ls})), the condition  $N_{th}^i>b^i_c+A_p^{i,T_{mw}^{S_m}}$ is surely be satisfied which ensures $S_c^i=S_m$ at $t_1$. By following the same argument, it can be proved that $S_c^i=S_m$ for the next $N_{po}=\big\lfloor\frac{T_{lb}^{S_m}-T_{mw}^{S_m}}{T_m}\big\rfloor$ number of observation instants (refer Fig. \ref{illast}).  
  	 Thus, $S_c^i$ of an observation instant depends on $S_c^i$ of the previous  $N_{po}$ number of points which proves the non-Markovian nature of OSMP-EO.      
  \end{proof}
\thref{cl2} proves that the proposed OSMP-EO protocol violates the Memoryless property.  We now reason that intelligent selection of state description and discrete observation points, different from the actual decision points, allows converting this  non-Markovian system to a Markovian one firstly for the case of $S_c^i=S_m\in\{ds,fs\}$ and then for $S_c^i=on$.
\subsubsection*{\textbf{Case} $\mathbf{S_c^i=S_m}$}
From the afore-mentioned discussion (refer \thref{cl2}), we understand that once $ONU_i$ enters into sleep mode $S_m$ from active mode, it retains the same state for $N_{po}$ observation instants which introduces the dependency on the past. This dependency will be absent if  $T_{no}>T_{lb}^{S_m}-T_{mw}^{S_m}$. Thus, we have to choose a value of $T_{no}$ such that $T_{no}>T_{lb}^{S_m}-T_{mw}^{S_m}$ is satisfied without loosing any information. Since, in OSMP-EO, $ONU_i$ decides its mode only at some discrete decision points (separated by $T_m$ in this case), observation instants must coincide with some decision points causing $T_{no}\geq(N_{po}+1)T_m$. At the time instant when $T_{no}=(N_{po}+1)T_m$, $ONU_i$ has a non-zero probability of entering into active mode. Thus, this information cannot be captured if $T_{no}>(N_{po}+1)T_m$. 
Consequently, $T_{no}=(N_{po}+1)T_m$ is the only possible option which is selected in the analysis for the case of $S_{p}^i=on$ and $S_c^i=S_m$. However, if $S_{p}^i=S_m$ and $S_c^i=S_m$ then the possibility of choosing $S_n^i=on$ is present in all decision points. Thus, in this case, we choose $T_{no}=T_m$. As a result, $T_{no}$ depends on $S_p^i$ resulting in violation of Markovian property. Nevertheless, the system can be converted to a Markovian process simply by including $S_p^i$ into the state description. It is now clear that $b_c^i$ and $S_c^i$ at the next observation point can be summarized by $b_c^i$, $S_c^i$ and $S_p^i$ of the current observation instant (refer {Section} \ref{ssec:statetr} for further discussion).  $S_p^i$ of the next state is identical to $S_c^i$ of the current state. Hence, $S_p^i$, $S_c^i$ and $b_c^i$ jointly holds Markovian property allowing $\{S_p^i,S_c^i,b_c^i\}$ as a valid state of the DTMC.

 \subsubsection*{\textbf{Case} $\mathbf{S_c^i=on}$}
 Our next step is to prove that $\{S_p^i,S_c^i,b^i_c\}$ is a valid state description even for $S_c^i=on$.   
If $S_p^i=S_m\in\{ds,fs\}$ and $S_c^i=on$, $ONU_i$ requires $T_{sw}^{S_m}$ duration to wake-up from sleep mode $S_m$. The OSMP-EO  being an ONU-assisted protocol, $ONU_i$ takes the decision of waking up from sleep mode $S_m$ without any prior information of the GATE message arrival time. Consequently, it is possible for $ONU_i$ to wake-up anywhere in between two consecutive GATE messages. The fixed grant sizing scheme ensures a constant time interval between two consecutive  GATE messages: $T_{cm}$. Thus, reception of next GATE message after waking up from sleep mode $S_m$ requires $T_{cm}/2$ duration on an average. Thereafter, the US transmission is initiated and it continues until $N_{th}^i$ number of packets get up-streamed which requires  $\big\lceil \frac{N_{th}^i}{N_m^i}\big\rceil T_{cm}$ time interval where $N_m^i$ is the maximum granted bandwidth of $ONU_i$ in one cycle. As a result, the next decision about its state is taken after $T_{sw}^{S_m}+\big\lceil \frac{N_{th}^i}{N_m^i}\big\rceil T_{cm}+0.5T_{cm}=T_{mo}^{S_m}$ duration. On the other hand, if $S_p^i=S_c^i=on$ then $ONU_i$ up-stream $b_i^c$ number of packets before taking further decision about $S_c^i$. Since, in this case, no time gets wasted for waking up, up-streaming $b_i^c$ number of packets requires $\big\lceil\frac{b_i^c}{N_m^i}\big\rceil T_{cm}$ duration which is the next decision point. In this case, if observation instants of the DTMC are considered to be same as decision points of the OSMP-EO protocol, by following the similar argument as provided for the case of $S_c^i=S_m$ (discussed above), it is evident that $\{S_p^i,S_c^i,b^i_c\}$ is a valid state description even for $S_c^i=on$. 
 {From the above discussion it is now clear that the OSMP-EO protocol can be modeled by selecting observation instances for different cases as discussed above, and the states as $\{S_p^i, S_c^i, b^i_c\}$.} The proposed DTMC of the OSMP-EO algorithm is presented in Fig. \ref{DTMC}(a) where all state transitions from states with $b^i_c=k$ is shown.         
In OSMP-EO, transition from one sleep mode to other is not possible, and therefore, the states $\{ds,fs,b^i_c~(\forall b^i_c)\}$ and $\{fs,ds,b^i_c~(\forall b^i_c)\}$ are unfeasible. $N_{sz}^i$ being the buffer size of $ONU_i$, states with $b_c^i>N_{sz}^i$ are also invalid. $ONU_i$ decides $S_c^i=S_m\in \{ds,fs\}$ only if the buffer of it contains lesser than $N_{th}^i$ number of packets till the next $T_{mw}^{S_m}$ duration and the next observation instant is after $T_m <T_{mw}^{S_m}$ duration. Thus, at the current as well as next observation instants, the buffer occupancy of $ONU_i$ is lesser than $N_{th}^i$. Consequently, the states $\{S_m,S_c^i,b_c^i\geq N_{th}^i\}$ and $\{on,S_m,b_c^i\geq N_{th}^i\}$ are invalid.  As $S_c^i$ of the current state becomes $S_p^i$ of the next state, state transitions from states with $S_c^i$ to states with $S_p^i\neq S_c^i$ are impossible. If in the current observation instant $S_c^i=S_m\in\{ds,fs\}$, the up-stream transmission of $ONU_i$ is terminated making state transition from states with $b^i_c=k$ to states with $b^i_c<k$ is impractical (refer Fig. \ref{DTMC}(a)). {From the above discussion it is now clear that the OSMP-EO protocol can be modeled by observing the system in some discrete time instants and the state space is finite. Hence, the proposed Markov model is a DTMC. } 
State transition probabilities for all other cases are calculated next.  
\section{Calculation of State Transition Probabilities} \label{ssec:statetr}  
 Here, state transition probabilities are calculated for all cases segregated based on all feasible combinations of  $S_p^i$ and $S_c^i$. 
 \subsection{Transition from states with $S_c^i=S_m\in\{ds,fs\}$}  
 Let, at the current observation time $t$, the state of $ONU_i$ be $\{S_p^i~(=S_m\text{ or } on),S_c^i=S_m,b_c^i=k\}$. At time $t$, for deciding $S_c^i=S_m$, $ONU_i$ must satisfy the condition $A_p^{i,T_{pc}}\leq N_{th}^i-k-1$ where $T_{pc}=T_{mw}^{S_m}$ if $S_p^i=S_m$ and otherwise, $T_{pc}=T_{lb}^{S_m}$ (refer {Section} \ref{subsubsection:osmp}). The next observation instant is after $T_{no}=T_m$ duration if $S_p^i=S_m$ while if $S_p^i=on$, $T_{no}=T_{no}=\Big(\lfloor\frac{T_{lb}^{S_m}-T_{mw}^{S_m}}{T_m}\rfloor+1\Big)T_m$ (refer {Section} \ref{ssec:stated}). As, during sleep mode, no US transmission is possible, at the next observation instant ({i.e.} at $t_1=t+T_{no}$), queue length will be $j(>k)$ ($b_c^i=j$) if during $T_{no}$ duration $j-k$ number of packets arrive ({i.e.} $A^{i,T_{no}}=j-k$). The rules, followed by $ONU_i$ for deciding $S_c^i$ at $t_1$, are:  $S_c^i=S_m$ if $A_p^{i,T_{pn}}\leq N_{th}^i-j-1$ and otherwise, $S_c^i=on$ where $T_{pn}=T_{mw}^{S_m}$
 Thus, at the next observation point the state of $ONU_i$ will be $\{S_p^i=S_m,S_c^i=S_m,b_c^i=j\}$ if  the events $A^{i,T_{no}}=j-k$ and $A_p^{i,T_{pn}}\leq N_{th}^i-j-1$ occurs simultaneously whereas, the state will be $\{S_p^i=S_m,S_c^i=on,b_i=j\}$ if both $A^{i,T_{no}}=j-k$ and $A_p^{i,T_{pn}}\geq N_{th}^i-j$ happens. However, due to the assumption of ideal prediction mechanism, the condition that is satisfied at time $t$ ({i.e.} $A_p^{i,T_{pc}}\leq N_{th}^i-k-1$) must hold true. Let the events $A^{i,T_{no}}=j-k$, $A_p^{i,T_{pn}}\leq N_{th}^i-j-1$, $A_p^{i,T_{pn}}\geq N_{th}^i-j$ and $A_p^{i,T_{pc}}\leq N_{th}^i-k-1$ is denoted by $Y_{j-k}$, $E_1$, $\xoverline[0.5]{E_1}$ and $E_2$ respectively.
 Probability of state transition from $\{S_p^i, S_m, k\}$ to  $\{S_m, S_m, j\}$ and $\{S_m, on, j\}$, denoted by $p_{S^i_p,S_m,k}^{S_m,S_m,j}$ and $p_{S_p^i,S_m,k}^{S_m,on,j}$ respectively, are given by:
  \begin{figure*}
  	\centering
 	\includegraphics[scale=.45]{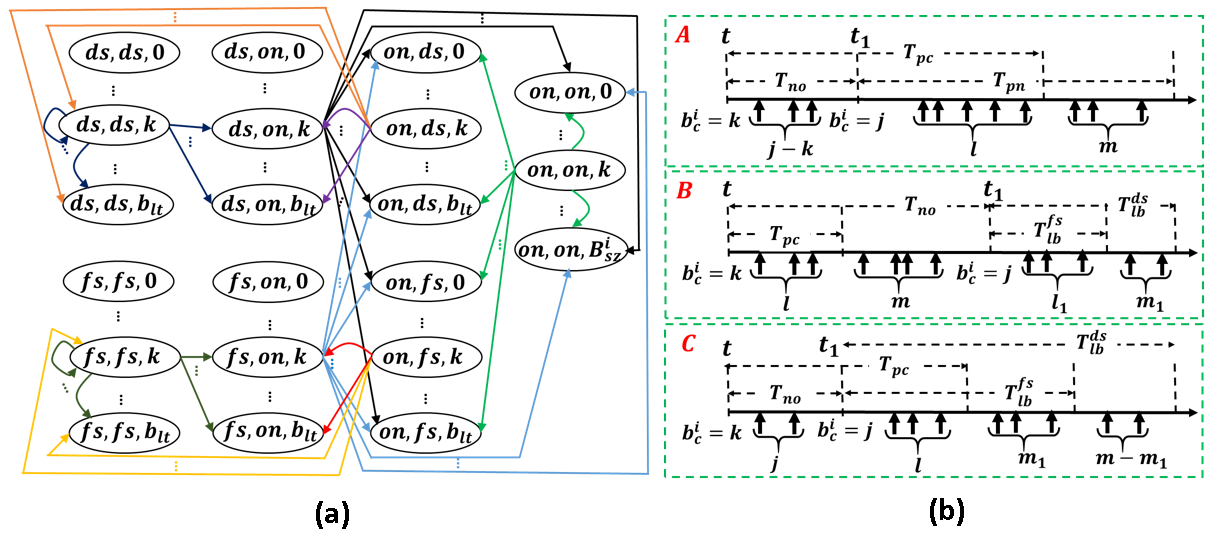}
 	\caption{(a) DTMC of OSMP-EO (b) Calculation of state transition probabilities. $b_{lt}=N_{th}^i-1$}
 	\label{DTMC}
 \end{figure*} 
\begin{align}\label{rwt}
 \nonumber p_{S_p^i,S_m,k}^{S_m,S_m,j}&=Pr\{Y_{j-k}, E_1 | E_2\}  ~~~~\text{and} \\ p_{S_p^i,S_m,k}^{S_m,on,j}&=Pr\{Y_{j-k}, \xoverline[0.5]{E_1} | E_2\}
\end{align}
where $Pr\{E\}$ denotes probability of occurrence of event $E$.
By using Bayes' theorem, eq. (\ref{rwt}) can be rewritten by eq. (\ref{num}).
\begin{align}\label{num}
\nonumber p_{S_p^i,S_m,k}^{S_m,S_m,j}&=\dfrac{Pr\{E_1,E_2 |Y_{j-k} \}}{Pr\{E_2\}} Pr\{Y_{j-k}\}~~~~\text{and} \\ p_{S_p^i,S_m,k}^{S_m,on,j}&=\dfrac{Pr\{\xoverline[0.5]{E_1},E_2 | Y_{j-k}\}}{Pr\{E_2\}}Pr\{Y_{j-k}\}
\end{align} 
 As the arrival process to $ONU_i$ follows the Poisson distribution with mean $\lambda$, $Pr\{Y_{j-k}\}$ is evaluated as the probability of arrival of $j-k$ packets from a Poisson traffic source with mean $\lambda T_{no}$, denoted by $\alpha_{j-k}^{\lambda T_{no}}$.
 Thus, $Pr\{Y_{j-k}\}$ is given by eq. (\ref{arr}).
 \begin{equation}\label{arr} 
Pr\{Y_{j-k}\}=\alpha_{j-k}^{\lambda T_{no}} \quad \textrm{where} \quad \alpha_l^\mu=\frac{e^{-\mu}{\mu}^{l}}{(l)!}
 \end{equation}	
We now calculate all other probability terms of eq. (\ref{num}) in \thref{l1}, \thref{l2} and \thref{l3}.
 \begin{lemma}\thlabel{l1}
 	$Pr\{E_2\}=\sum_{w=0}^{N_{th}^i-k-1}\alpha_{w}^{\lambda T_{pc}}$
 \end{lemma}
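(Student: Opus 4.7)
The plan is to unpack the definition of $E_2$ and then invoke two of the assumptions stated earlier: ideal traffic prediction and Poisson arrivals. Recall that $E_2$ was defined as the event $A_p^{i,T_{pc}} \leq N_{th}^i - k - 1$, where $A_p^{i,T}$ counts the \emph{predicted} number of packet arrivals to $ONU_i$ over a time interval of length $T$.

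First I would rewrite the event $\{A_p^{i,T_{pc}} \leq N_{th}^i - k - 1\}$ as the disjoint union $\bigcup_{w=0}^{N_{th}^i - k - 1} \{A_p^{i,T_{pc}} = w\}$, so that $Pr\{E_2\} = \sum_{w=0}^{N_{th}^i-k-1} Pr\{A_p^{i,T_{pc}} = w\}$ by finite additivity. Next, by the ideal-prediction assumption stated in Section \ref{ssec:stated} (``the predicted buffer fill-up time always replicates the actual values''), the predicted arrival process $A_p^{i,T_{pc}}$ has the same distribution as the true arrival process $A^{i,T_{pc}}$. Finally, by the Poisson-arrivals assumption with rate $\lambda$, $A^{i,T_{pc}}$ is Poisson with mean $\lambda T_{pc}$, so $Pr\{A_p^{i,T_{pc}} = w\} = e^{-\lambda T_{pc}}(\lambda T_{pc})^w / w! = \alpha_w^{\lambda T_{pc}}$ using the notation introduced in eq.~(\ref{arr}). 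Summing over $w$ gives the claimed expression.

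There is essentially no technical obstacle here; the lemma is a direct bookkeeping step that records, for later reuse in eq.~(\ref{num}), the normalizing denominator $Pr\{E_2\}$ as a Poisson CDF evaluated at $N_{th}^i - k - 1$. The only subtlety worth flagging is that $T_{pc}$ depends on whether $S_p^i = S_m$ (in which case $T_{pc} = T_{mw}^{S_m}$) or $S_p^i = on$ (in which case $T_{pc} = T_{lb}^{S_m}$), but since the lemma is stated generically in terms of $T_{pc}$, no case split is needed in the proof itself.
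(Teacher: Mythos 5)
Your proof is correct and follows essentially the same route as the paper: invoke ideal prediction to identify $A_p^{i,T_{pc}}$ with the actual Poisson arrival process of mean $\lambda T_{pc}$, then sum the Poisson pmf over $w \le N_{th}^i-k-1$. Your version just spells out the finite-additivity step and the case-independence of the statement in $T_{pc}$, which the paper leaves implicit.
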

\begin{proof}
	Due to the assumption of ideal traffic prediction, the random process $A_p^{i,T_{pc}}$ become identical to $A^{i,T_{pc}}$ which follows Poisson distribution with mean $\lambda T_{pc}$. Thus, the event $E_2$ occurs if this Poisson process generates lesser than $N_{th}^i-k$ packets which proves this lemma. 
\end{proof}

\begin{lemma}\thlabel{l2}
	If $T_1=T_{pc}-T_{no}$ and $T_2=T_{no}+T_{pn}-T_{pc}$ then $Pr\{E_1,E_2 |Y_{j-k} \}=\sum_{l=0}^{N_{th}^i-j-1}\sum_{m=0}^{N_{th}^i-j-l-1}\alpha_l^{\lambda T_1}\alpha_m^{\lambda T_2}$.
\end{lemma}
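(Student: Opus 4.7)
The plan is to exploit the Poisson structure together with independence of counts on disjoint time intervals. First I would lay out the three time windows on the real line: the current prediction window $[t,t+T_{pc}]$, the inter-observation window $[t,t+T_{no}]$ and the next prediction window $[t+T_{no},t+T_{no}+T_{pn}]$. Because in the cases under consideration $T_{pc}>T_{no}$, these windows overlap on $[t+T_{no},t+T_{pc}]$, whose length is exactly $T_1=T_{pc}-T_{no}$, while the portion of the next prediction window lying strictly beyond the current one is $[t+T_{pc},t+T_{no}+T_{pn}]$, of length $T_2=T_{no}+T_{pn}-T_{pc}$. So $T_{pn}=T_1+T_2$ and the three intervals $[t,t+T_{no}]$, $[t+T_{no},t+T_{pc}]$, $[t+T_{pc},t+T_{no}+T_{pn}]$ are pairwise disjoint.

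Next I would introduce the arrival counts on the latter two disjoint pieces, say $L$ on the length-$T_1$ slab and $M$ on the length-$T_2$ slab. Under the ideal-prediction assumption invoked in Section~\ref{ssec:stated}, $A_p^{i,T}=A^{i,T}$, so using the Poisson arrival hypothesis from Section~\ref{ssec:stated} together with the independent-increments property of the Poisson process, $L\sim\text{Poisson}(\lambda T_1)$ and $M\sim\text{Poisson}(\lambda T_2)$ independently, and both are independent of the count on $[t,t+T_{no}]$ (which is fixed to $j-k$ by conditioning on $Y_{j-k}$).

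Then I would rewrite the events in terms of $L$ and $M$. Given $Y_{j-k}$, the event $E_2=\{A^{i,T_{pc}}\leq N_{th}^i-k-1\}$ becomes $\{(j-k)+L\leq N_{th}^i-k-1\}$, i.e., $\{L\leq N_{th}^i-j-1\}$. Similarly $E_1=\{A^{i,T_{pn}}\leq N_{th}^i-j-1\}$ becomes $\{L+M\leq N_{th}^i-j-1\}$. The latter implies the former, so $E_1\cap E_2$ conditional on $Y_{j-k}$ collapses to the single constraint $L+M\leq N_{th}^i-j-1$ with $L,M\geq 0$.

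Finally, using independence of $L$ and $M$ I would write
\begin{equation*}
Pr\{E_1,E_2\mid Y_{j-k}\}=\sum_{l=0}^{N_{th}^i-j-1}\sum_{m=0}^{N_{th}^i-j-l-1}Pr\{L=l\}\,Pr\{M=m\},
\end{equation*}
and substitute the Poisson pmf from eq.~(\ref{arr}) to obtain the claimed double sum. The only delicate step is bookkeeping of the overlap: one must be sure that the two prediction windows share exactly the sub-interval of length $T_1$ and that the residual lies entirely in the future of the current prediction window, because otherwise the independence argument breaks down. This is easily verified in both sub-cases ($S_p^i=S_m$ with $T_{no}=T_m$ and $S_p^i=on$ with $T_{no}=(N_{po}+1)T_m$), since in each sub-case the definitions of $T_{pc}$, $T_{no}$ and $T_{pn}$ spelled out in Section~\ref{subsubsection:osmp} give $T_{pc}>T_{no}$ and $T_{no}+T_{pn}>T_{pc}$.
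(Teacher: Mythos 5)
Your proposal is correct and follows essentially the same route as the paper: split the timeline into the disjoint pieces of lengths $T_{no}$, $T_1=T_{pc}-T_{no}$ and $T_2=T_{no}+T_{pn}-T_{pc}$, use independent Poisson increments to make the counts $l$ and $m$ on the last two pieces independent of the conditioning event $Y_{j-k}$, and translate $E_2$ and $E_1$ into the constraints $l\leq N_{th}^i-j-1$ and $l+m\leq N_{th}^i-j-1$, which yields exactly the stated double sum. Your extra observations (that $E_1$ conditionally implies $E_2$, and that $T_{pc}>T_{no}$ and $T_{no}+T_{pn}>T_{pc}$ hold in both sub-cases so $T_1,T_2>0$) are correct refinements of the same argument rather than a different proof.
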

\begin{proof}
	Let $l$ and $m$ number of packets arrive over the time interval $T_1$ and $T_2$ respectively as shown in A of Fig \ref{DTMC}(b). Since, at $t$, $b_c^i=k$, the event $Y_{j-k}$ results in $b_c^i=j$ at $t_1$ and hence,  $A_p^{i,T_{pc}}=j-k+l$ and $A_p^{i,T_{pn}}=l+m$.   
	  The event $E_2$ and $E_1$ restricts $l$ and $m$ by $N_{th}^i-j-1$ and  $N_{th}^i-j-l-1$ respectively. Hence, $Pr\{E_1,E_2 |Y_{j-k} \}$ is evaluated as the probability of arrival of $l$ and $m$ number of packets over the time interval $T_1$ and $T_2$ by a Poisson source with mean arrival rate $\lambda$ where $l \leq N_{th}^i-j-1$ and $m\leq N_{th}^i-j-l-1$. This proves \thref{l3}.
\end{proof}
\begin{lemma}\thlabel{l3}
	If $T_1=T_{pc}-T_{no}$ and $T_2=T_{no}+T_{pn}-T_{pc}$ then $Pr\{\xoverline[0.5]{E_1},E_2 | Y_{j-k}\}=\sum_{l=0}^{N_{th}^i-j-1}\sum_{m=N_{th}^i-j-l}^{\infty}\alpha_l^{\lambda T_1}\alpha_m^{\lambda T_2}$.
\end{lemma}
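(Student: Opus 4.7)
The plan is to mirror the decomposition already used in the proof of Lemma \thref{l2}, but with the event $E_1$ replaced by its complement $\xoverline[0.5]{E_1}$, which converts the upper bound on the relevant arrival count into a lower bound. The key geometric observation is the same one exploited in Lemma \thref{l2}: the three time windows $T_{no}$, $T_{pc}$ (starting at the current observation instant $t$) and $T_{pn}$ (starting at the next observation instant $t_1 = t + T_{no}$) overlap because $T_{pc} > T_{no}$ and $T_{pc} < T_{no} + T_{pn}$. Consequently, the union of the prediction windows partitions into three disjoint sub-intervals of lengths $T_{no}$, $T_1 = T_{pc}-T_{no}$, and $T_2 = T_{no}+T_{pn}-T_{pc}$, as depicted in part~A of Fig.~\ref{DTMC}(b). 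I would introduce $l$ and $m$ as the numbers of packets arriving in the latter two sub-intervals.

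Conditioning on $Y_{j-k}$ pins the arrivals in $[t, t_1]$ to exactly $j-k$, so under the ideal-prediction assumption the two predicted counts decompose as $A_p^{i,T_{pc}} = (j-k)+l$ and $A_p^{i,T_{pn}} = l + m$. Translating the events into constraints on $l$ and $m$: the event $E_2$ becomes $l \leq N_{th}^i - j - 1$, while the complement $\xoverline[0.5]{E_1}$ becomes $l + m \geq N_{th}^i - j$, i.e.\ $m \geq N_{th}^i - j - l$. The only structural change from Lemma \thref{l2} is this flip of the inner-summation bound from an upper limit $N_{th}^i - j - l - 1$ to a lower limit $N_{th}^i - j - l$, with the inner sum now running to $\infty$.

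To factor the joint probability, I would invoke the independent-increments property of the Poisson arrival process: the counts $l$ and $m$ over the two disjoint sub-intervals are independent Poisson random variables with means $\lambda T_1$ and $\lambda T_2$ respectively. Their joint probability mass function therefore factors as $\alpha_l^{\lambda T_1}\,\alpha_m^{\lambda T_2}$, and summing over all admissible $(l,m)$ pairs yields the claimed double sum $\sum_{l=0}^{N_{th}^i-j-1}\sum_{m=N_{th}^i-j-l}^{\infty} \alpha_l^{\lambda T_1}\alpha_m^{\lambda T_2}$.

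I do not expect a genuine obstacle here, because the argument parallels Lemma \thref{l2} almost verbatim. The only care needed is bookkeeping: one should verify that the lower bound of the inner sum is well-posed for every $l$ in the outer range. Since the outer sum enforces $l \leq N_{th}^i - j - 1$, we always have $N_{th}^i - j - l \geq 1$, so no degenerate or negative index arises, and the expression is a valid convergent double series (the inner sum equals $1 - \sum_{m=0}^{N_{th}^i-j-l-1}\alpha_m^{\lambda T_2}$, which could provide an alternative compact form if preferred for numerical evaluation).
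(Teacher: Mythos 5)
Your proposal is correct and follows exactly the route the paper intends: the paper's own proof of \thref{l3} is simply the remark that it follows by the same argument as \thref{l2}, and your write-up carries out that argument faithfully, with the only change being the flip of the constraint on $m$ from $m\leq N_{th}^i-j-l-1$ to $m\geq N_{th}^i-j-l$ when $E_1$ is replaced by $\xoverline[0.5]{E_1}$. The decomposition into disjoint sub-intervals of lengths $T_1$ and $T_2$, the identification $A_p^{i,T_{pc}}=j-k+l$ and $A_p^{i,T_{pn}}=l+m$ under $Y_{j-k}$, and the factorization via independent Poisson increments all match the paper's treatment.
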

\begin{proof}
	\thref{l3} can be easily proved by following similar argument as \thref{l2}.
\end{proof}
\subsection{Transition from states with $S_p^i=S_m \in \{ds, fs\}$) and $S_c^i=on$}\label{ssec:Smon}
Let, at current observation instant $t$, the state of $ONU_i$ is $\{S_m, on,k\}$. For changing the mode from $S_m$ to $on$, at $t$, $ONU_i$ must ensures $A_{p}^{i,T_{pc}}\geq N_{th}^i-k$ where $T_{pc}=T_{mw}^{S_m}$. The next observation instant is after $T_{no}=T_{sw}^{S_m}+\big\lceil \frac{N_{th}^i}{N_m^i}\big\rceil T_{cm}+\frac{T_{cm}}{2}=T_{mo}^{S_m}$ duration ({i.e.} at $t_1=t+T_{no}$) when $N_{th}^i$ number of packets depart. Thus, at $t_1$, $b_c^i=j$ requires $A^{i,T_{no}}=j-k+N_{th}^i$. An approximate analysis of OSMP-EO protocol  for grant sizing schemes different from fixed one can be carried out in the similar manner only by replacing  $T_{no}=T_{sw}^{S_m}+\big\lceil \frac{N_{th}^i}{N_m^i}\big\rceil T_c^{avg}+0.5T_c^{avg}$ where $T_{c}^{avg}$ is the average cycle time of the OLT, calculated by analyzing the underlying Dynamic Bandwidth Allocation (DBA) scheme. 
At $t_1$, $S_c^i=ds$ if $A_{p}^{i,T_{lb}^{ds}}\leq N_{th}^i-j$ (refer {Section} \ref{subsubsection:osmp}). We denote the events: $A_{p}^{i,T_{pc}}\geq N_{th}^i-k$, $A_{p}^{i,T_{lb}^{ds}}< N_{th}^i-j$ by $\xoverline[.5]{E_2}$, $E_{ds}$ respectively. Hence, the transition probability from state $\{S_m, on,k\}$ to $\{on,ds,j\}$, denoted by $p_{S_m,on,k}^{on,ds,j}$, is given by eq. (\ref{starts}).
\begin{align}\label{starts}
p_{S_m,on,k}^{on,ds,j}&=Pr\{Y_{j-k+N_{th}^i},E_{ds}|\xoverline[.5]{E_2}\}
\end{align}
Since, in this case, $T_{pc}<T_{no}$, the time interval over which the prediction is perform to decide $S_c^i$ at the next observation instant ({i.e.} $T_{pn}$) does not intersects with both $T_{pc}$ and $T_{no}$ (refer B of Fig. \ref{DTMC}(b)). Thus, $E_{ds}$ is independent of both $Y_{j-k+N_{th}^i}$ and $\xoverline[.5]{E_2}$. By using  Bayes' Theorem and then this independence condition, eq. (\ref{starts}) turns out to be eq. (\ref{dk}).
 \begin{equation}\label{dk}
p_{S_c,on,k}^{on,ds,j}=\frac{Pr\{ Y_{j-k+N_{th}^i},\xoverline[.5]{E_2}\}}{Pr\{\xoverline[.5]{E_2}\}}Pr\{E_{ds}\}
\end{equation}
We know, if $A_{p}^{i,T_{lb}^{fs}}<N_{th}^i-j$ and $A_{p}^{i,T_{lb}^{ds}}\geq N_{th}^i-j$ occur simultaneously then at $t_1$, $S_c^i=fs$ and otherwise, $S_c^i=on$. Let the events:  $A_{p}^{i,T_{lb}^{ds}}\geq N_{th}^i-j$,  
$A_{p}^{i,T_{lb}^{fs}}< N_{th}^i-j$ and $A_{p}^{i,T_{lb}^{fs}}\geq N_{th}^i-j$ is denoted by $\xoverline[.5]{E_{ds}}$, $E_{fs}$, $\xoverline[.5]{E_{fs}}$ respectively. Thus, transition probabilities from state $\{S_m, on,k\}$ to  $\{on,fs,j\}$ and $\{on,on,j\}$, denoted by  $p_{S_m,on,k}^{on,fs,j}$ and $p_{S_m,on,k}^{on,on,j}$ respectively, are given by eq. (\ref{start1}).
 \begin{align}\label{start1}
\nonumber p_{S_c,on,k}^{on,fs,j}&=\frac{Pr\{ Y_{j-k+N_{th}^i},\xoverline[.5]{E_2}\}}{Pr\{\xoverline[.5]{E_2}\}}Pr\{\xoverline[.5]{E_{ds}}E_{fs}\}
~~~~\text{and}\\p_{S_c,on,k}^{on,on,j}&=\frac{Pr\{ Y_{j-k+N_{th}^i},\xoverline[.5]{E_2}\}}{Pr\{\xoverline[.5]{E_2}\}}Pr\{\xoverline[.5]{E_{fs}}\}
\end{align}
  It is evident that if $j\geq N_{th}^i$ then $Pr\{{E_{ds}}\}=Pr\{\xoverline[.5]{E_{ds}}E_{fs}\}=0$ and $Pr\{\xoverline[.5]{E_{fs}}\}=1$. Since $N_{sz}^i$ is the buffer size, if $A^{i,T_{no}}> N_{sz}^i+N_{th}^i-k$ then the buffer remains full ({i.e.} $b_c^i=N_{sz}^i$) and the excess packets will drop causing state transition from state with $b_c^i=k$ to state with $b_c^i=N_{sz}^i$. 
  Hence, $p_{S_m,on,k}^{on,on,N_{sz}^i}$  is given by eq. (\ref{onB}).
\begin{align}\label{onB}
\nonumber p_{S_m,on,k}^{on,on,N_{sz}^i}=\sum_{j=N_{sz}^i}^{\infty}{Pr\{ Y_{j-k+N_{th}^i}|\xoverline[.5]{E_2}\}}\\=\dfrac{\sum_{j=N_{sz}^i}^{\infty}{Pr\{ Y_{j-k+N_{th}^i},\xoverline[.5]{E_2}\}}}{Pr\{\xoverline[.5]{E_2}\}}
\end{align}
  
Now, we calculate all probability terms of eq. (\ref{dk}), eq. (\ref{start1}) and eq. (\ref{onB}) in \thref{l4}\textendash\thref{l7}. 
\begin{lemma}\thlabel{l4}
	$Pr\{\xoverline[.5]{E_2}\}=\sum_{w=N_{th}^i-k}^{\infty}\alpha_w^{\lambda {T_{mw}^{S_m}}}$
\end{lemma}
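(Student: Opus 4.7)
The plan is to proceed exactly in parallel with \thref{l1}, since \thref{l4} is essentially its complement. The claim reduces to identifying the distribution of the predicted arrival random variable and then summing the appropriate tail of a Poisson probability mass function.

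First, I would unpack the event. By definition (stated just above eq.~(\ref{starts})), $\xoverline{E_2}$ is the event $A_p^{i,T_{pc}} \ge N_{th}^i - k$, since $E_2$ was defined as $A_p^{i,T_{pc}} \le N_{th}^i - k - 1$. Moreover, in the present case ($S_p^i = S_m$, $S_c^i = on$), it was stated at the start of Section \ref{ssec:Smon} that $T_{pc} = T_{mw}^{S_m}$. So the target is to compute $Pr\{A_p^{i,T_{mw}^{S_m}} \ge N_{th}^i - k\}$.

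Next, invoking the ideal traffic prediction assumption from Section \ref{ssec:stated}, the predicted arrival process coincides with the actual one. Hence $A_p^{i,T_{mw}^{S_m}}$ has the same distribution as $A^{i,T_{mw}^{S_m}}$, which is Poisson with mean $\lambda T_{mw}^{S_m}$; its point probabilities are precisely the $\alpha_w^{\lambda T_{mw}^{S_m}}$ introduced in eq.~(\ref{arr}). Summing the Poisson PMF over the range $w \ge N_{th}^i - k$ then yields the stated formula.

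There is essentially no obstacle here; the only thing worth double-checking is the lower summation index. Since $E_2$ uses strict inequality $\le N_{th}^i - k - 1$, its complement starts at $w = N_{th}^i - k$, which matches the claim exactly. One minor remark is that the sum is necessarily convergent (with value at most $1$), so no issues of interchange of limits arise. The lemma therefore follows immediately as the complementary tail of the Poisson distribution used in \thref{l1}.
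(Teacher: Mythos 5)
Your proposal is correct and matches the paper's argument: the paper simply writes $Pr\{\xoverline[.5]{E_2}\}=1-Pr\{E_2\}$ and invokes \thref{l1}, which is exactly your complementary-tail computation (ideal prediction makes $A_p^{i,T_{mw}^{S_m}}$ Poisson with mean $\lambda T_{mw}^{S_m}$, and the lower index $N_{th}^i-k$ is the complement of the strict bound in $E_2$). No gaps; the two derivations are the same up to whether one sums the tail directly or subtracts the head from $1$.
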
 
\begin{proof}
		We know, $Pr\{\xoverline[.5]{E_2}\}=1-Pr\{E_2\}$ where $Pr\{E_2\}$ is given by \thref{l1}.
\end{proof}
\begin{lemma}\thlabel{l5}
	$Pr\{E_{S_m}\}=\sum_{w=0}^{N_{th}^i-j-1}\alpha_w^{\lambda {T_{lb}^{S_m}}}$ and 	$Pr\{\xoverline[.5]{E_{S_m}}\}=\sum_{w=N_{th}^i-j}^{\infty}\alpha_w^{\lambda {T_{lb}^{S_m}}}$, $S_m\in \{ds,fs\}$
\end{lemma}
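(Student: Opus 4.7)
The plan is to follow the exact same template as the proof of \thref{l1}, since the events $E_{S_m}$ and $\xoverline{E_{S_m}}$ are structurally identical to $E_2$ and $\xoverline{E_2}$, differing only in the prediction horizon (here $T_{lb}^{S_m}$ instead of $T_{pc}$) and the threshold on the packet count (here $N_{th}^i-j$ instead of $N_{th}^i-k$). So the argument is essentially a bookkeeping exercise once the ideal-prediction assumption is invoked.

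First I would recall the definitions from Section IV-B: the event $E_{S_m}$ stands for $A_p^{i,T_{lb}^{S_m}} < N_{th}^i-j$, and $\xoverline{E_{S_m}}$ for $A_p^{i,T_{lb}^{S_m}} \geq N_{th}^i-j$, for $S_m\in\{ds,fs\}$. Next I would invoke the ideal-prediction assumption stated in Section III, which makes the predicted arrival process $A_p^{i,T_{lb}^{S_m}}$ coincide in distribution with the actual arrival process $A^{i,T_{lb}^{S_m}}$. Then I would apply the Poisson assumption on the arrival traffic (also stated in Section III), giving that $A^{i,T_{lb}^{S_m}}$ is Poisson with mean $\lambda T_{lb}^{S_m}$, so that $\Pr\{A^{i,T_{lb}^{S_m}}=w\}=\alpha_w^{\lambda T_{lb}^{S_m}}$ with the notation defined in eq. (\ref{arr}).

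With these two reductions in hand, the two claims become elementary:
\begin{align*}
\Pr\{E_{S_m}\} &= \Pr\{A^{i,T_{lb}^{S_m}} < N_{th}^i-j\} = \sum_{w=0}^{N_{th}^i-j-1} \alpha_w^{\lambda T_{lb}^{S_m}},\\
\Pr\{\xoverline{E_{S_m}}\} &= \Pr\{A^{i,T_{lb}^{S_m}} \geq N_{th}^i-j\} = \sum_{w=N_{th}^i-j}^{\infty} \alpha_w^{\lambda T_{lb}^{S_m}},
\end{align*}
with the two expressions being complementary as they should be (their sum is $\sum_{w=0}^{\infty}\alpha_w^{\lambda T_{lb}^{S_m}}=1$). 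I would then note that the argument is identical for $S_m=ds$ and $S_m=fs$, only the value of $T_{lb}^{S_m}$ changes, which is why the statement can be written in a unified way.

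There is no real obstacle here; the only conceptual subtlety worth explicitly mentioning in the write-up is that the replacement of $A_p$ by $A$ is legitimate only because of the ideal-prediction assumption — without it, one would need an auxiliary model of the prediction error and the proof would not close. I would therefore keep the proof short (a one-line appeal to \thref{l1}'s argument with the substitutions $T_{pc}\to T_{lb}^{S_m}$ and $k\to j$ suffices), which matches the style the authors have already adopted for \thref{l3}.
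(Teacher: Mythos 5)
Your proposal is correct and follows essentially the same route as the paper, which simply computes $Pr\{E_{S_m}\}$ by the same Poisson/ideal-prediction argument as \thref{l1} (with $T_{pc}\to T_{lb}^{S_m}$, $k\to j$) and obtains $Pr\{\xoverline[.5]{E_{S_m}}\}$ as its complement. No gaps.
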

\begin{proof}
	$Pr\{\xoverline[.5]{E_{S_m}}\}=1-Pr\{E_{S_m}\}$ where $Pr\{E_{S_m}\}$ is calculated similar to \thref{l1}.
\end{proof}
\begin{lemma}\thlabel{l6}
	$Pr\{ Y_{j-k+N_{th}^i},\xoverline[.5]{E_2}\}=\sum_{l=N_{th}^i-k}^{N_{th}^i-k+j}\alpha_l^{\lambda T_{pc}}\alpha_{j-k-l+N_{th}^i}^{\lambda (T_{no}-T_{pc})}$ where $T_{pc}=T_{mw}^{S_m}$.
\end{lemma}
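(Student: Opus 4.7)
The plan is to mirror the argument used in Lemmas \thref{l2} and \thref{l3}, but with a different partition of the time axis. The key structural difference in the present case ($S_p^i=S_m$, $S_c^i=on$) is that $T_{pc}=T_{mw}^{S_m}<T_{no}=T_{mo}^{S_m}$, so the prediction window at $t$ lies entirely inside the inter-observation window $[t,t+T_{no}]$, rather than overlapping and extending past it as in \thref{l2} and \thref{l3}. This suggests splitting $[t,t+T_{no}]$ into the two disjoint sub-intervals $[t,t+T_{pc}]$ of length $T_{pc}$ and $[t+T_{pc},t+T_{no}]$ of length $T_{no}-T_{pc}$, and exploiting the independent-increments property of the Poisson arrival process across these two pieces.

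First I would let $l$ denote the number of arrivals in $[t,t+T_{pc}]$. By the ideal-prediction assumption used throughout Section \ref{ssec:stated}, $A_p^{i,T_{pc}}=A^{i,T_{pc}}=l$, so the event $\xoverline[.5]{E_2}=\{A_p^{i,T_{pc}}\ge N_{th}^i-k\}$ translates into the constraint $l\ge N_{th}^i-k$. Simultaneously, the event $Y_{j-k+N_{th}^i}=\{A^{i,T_{no}}=j-k+N_{th}^i\}$ fixes the total number of arrivals over $T_{no}$, so it forces exactly $j-k+N_{th}^i-l$ arrivals in the second sub-interval; non-negativity of this count gives the upper bound $l\le N_{th}^i-k+j$.

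Next I would invoke the independence of Poisson arrivals on disjoint intervals to factor the joint probability, for each admissible $l$, as $\alpha_l^{\lambda T_{pc}}\,\alpha_{j-k+N_{th}^i-l}^{\lambda(T_{no}-T_{pc})}$, where $\alpha^\mu_l$ is the Poisson PMF defined in eq. (\ref{arr}). Summing over $l=N_{th}^i-k,\dots,N_{th}^i-k+j$ then yields the claimed expression. A brief sketch (picture analogous to part B of Fig. \ref{DTMC}(b), which already depicts this overlap configuration) makes the partition transparent.

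The only real subtlety, and hence the step I expect to need the most care, is the bookkeeping of the summation limits: the lower limit comes from the conditioning event $\xoverline[.5]{E_2}$, while the upper limit comes from the non-negativity of arrivals in $T_{no}-T_{pc}$ under $Y_{j-k+N_{th}^i}$, and these differ in form from the analogous double-sum limits appearing in \thref{l2} and \thref{l3}. Once the limits are pinned down correctly, the remainder is a routine application of the Poisson independent-increments property and mirrors the prior lemmas.
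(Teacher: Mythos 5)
Your proposal is correct and follows essentially the same route as the paper's own proof: partition $[t,t+T_{no}]$ into the prediction window of length $T_{pc}$ and the remainder of length $T_{no}-T_{pc}$, let $l$ be the arrivals in the first piece so that $\xoverline[.5]{E_2}$ gives $l\ge N_{th}^i-k$ and $Y_{j-k+N_{th}^i}$ forces $j-k+N_{th}^i-l\ge 0$ arrivals in the second, then factor by Poisson independent increments and sum over $l$. No gaps; the summation limits you derive match the paper's exactly.
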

\begin{proof}
	 Let $l$ and $m$ number of packets arrive over $T_{pc}$ and next $T_{no}-T_{pc}$ duration respectively (refer B of Fig. \ref{DTMC}(b)) resulting in $A^{i,T_{no}}=l+m$. The event $\xoverline[.5]{E_2}$ imposes a lower limit of $l$: $l\geq N_{th}^i-k$. Further, the event $Y_{j-k+N_{th}^i}$ ensures $m=j-k-l+N_{th}^i$. $m\geq 0$ restricts the value of $l$ by; $l\leq N_{th}^i-k+j$. This proves \thref{l6}. 
\end{proof}
\begin{lemma}\thlabel{l7}
	$Pr\{\xoverline[.5]{E_{ds}}E_{fs}\}=\sum_{l_1=0}^{N_{th}^i-j-1}\sum_{m_1=N_{th}^i-j-l_1}^{\infty}\alpha_{l_1}^{\lambda T_{lb}^{fs}}\alpha_{m_1}^{\lambda (T_{lb}^{ds}-T_{lb}^{fs})}$
\end{lemma}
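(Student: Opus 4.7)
The strategy is to mirror the decomposition argument already used in \thref{l2} and \thref{l3}, now applied to the two nested prediction windows of lengths $T_{lb}^{fs}$ and $T_{lb}^{ds}$. Because the thresholds satisfy $T_{lb}^{ds}>T_{lb}^{fs}$ (a direct consequence of eq. (\ref{dsv}) and eq. (\ref{ls}), since the former carries an additional positive term arising from the power gap between $fs$ and $ds$), the prediction window of length $T_{lb}^{ds}$ strictly contains the one of length $T_{lb}^{fs}$. This is the structural fact that lets the joint probability factor into independent Poisson pieces.

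Concretely, I would first split the interval of length $T_{lb}^{ds}$ into two disjoint sub-intervals: an initial stretch of length $T_{lb}^{fs}$ over which the predicted count equals $A_p^{i,T_{lb}^{fs}}$, and the remaining stretch of length $T_{lb}^{ds}-T_{lb}^{fs}$. Denote the predicted arrivals over these two sub-intervals by $l_1$ and $m_1$ respectively, so that $A_p^{i,T_{lb}^{ds}}=l_1+m_1$. Next, I would translate the two events into constraints on $(l_1,m_1)$: the event $E_{fs}$ (i.e.\ $A_p^{i,T_{lb}^{fs}}\le N_{th}^i-j-1$) bounds $l_1$ as $0\le l_1\le N_{th}^i-j-1$, while the event $\xoverline{E_{ds}}$ (i.e.\ $A_p^{i,T_{lb}^{ds}}\ge N_{th}^i-j$) forces $m_1\ge N_{th}^i-j-l_1$, with no upper bound on $m_1$.

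Finally, I would invoke the ideal-prediction assumption together with the Poisson arrival process to assert that $l_1$ and $m_1$ are independent Poisson random variables with means $\lambda T_{lb}^{fs}$ and $\lambda(T_{lb}^{ds}-T_{lb}^{fs})$ respectively, exactly as in \thref{l1}. Hence each joint outcome carries probability $\alpha_{l_1}^{\lambda T_{lb}^{fs}}\,\alpha_{m_1}^{\lambda(T_{lb}^{ds}-T_{lb}^{fs})}$, and summing over the admissible region gives the double sum in the statement. The only subtlety worth highlighting is verifying that $T_{lb}^{ds}>T_{lb}^{fs}$, so that the interval $T_{lb}^{ds}-T_{lb}^{fs}$ is indeed positive and the Poisson decomposition is meaningful; once that is in place the rest is a direct rerun of the argument used for \thref{l3}.
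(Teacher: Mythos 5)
Your proposal is correct and matches the paper's own argument: the same split of the $T_{lb}^{ds}$ prediction window into disjoint sub-intervals of lengths $T_{lb}^{fs}$ and $T_{lb}^{ds}-T_{lb}^{fs}$, the same translation of $E_{fs}$ and $\xoverline[.5]{E_{ds}}$ into the constraints $l_1\leq N_{th}^i-j-1$ and $m_1\geq N_{th}^i-j-l_1$, and the same appeal to independent Poisson increments. Your added remark that $T_{lb}^{ds}>T_{lb}^{fs}$ must hold for the decomposition to make sense is a reasonable (and correct) extra check that the paper leaves implicit.
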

\begin{proof}
	Let $l_1$ and $m_1$ number of packets arrive over time interval $T_{lb}^{fs}$ and next  $T_{lb}^{ds}-T_{lb}^{fs}$ (refer B of Fig. \ref{DTMC}(b)) causing  $A_p^{i,T_{lb}^{ds}}=l_1+m_1$. The events  $E_{fs}$ and $\xoverline[.5]{E_{ds}}$ restricts $l_1$ and $m_1$ by; $l_1\leq N_{th}^i-j-1$ and $m_1\geq N_{th}^i-j-l_1$ respectively which proves \thref{l7}.
\end{proof}
\subsection{Transition from states with $S_p^i=S_c^i=on$}
Let, at current time $t$, $ONU_i$ retain the active state which requires satisfying the condition $A_p^{i,T_{lb}^{fs}}\geq N_{th}^i-j$ ({i.e.} $\xoverline[.5]{E_{2}}$) if $b_c^i<N_{th}^i$ whereas if $b_c^i\geq N_{th}^i$ then no condition is needed to be satisfied. In this case, $T_{no}=\lceil b_c^i/N^i_{m} \rceil T_{cm}$ (refer {Section} \ref{ssec:stated}) when all $b_c^i=k$ number of packets get up-streamed. Thus, at $t_1=t+T_{no}$, $b_c^i=j$ requires $A^{i,T_{no}}=j$. Extension of this analysis for other grant sizing schemes can be performed by modifying $T_{no}$ in the similar manner as discussed above. Further, at $t_1$, $S_c^i$ is decided by verifying the following conditions: $S_c^i=ds$ if $E_{ds}$ occurs, while if both $\xoverline[.5]{E_{ds}}$ and $E_{fs}$ happens then $S_c^i=fs$ and otherwise, $S_c^i=on$.    
 Now, we calculate the state transition probability for three different cases: $T_{no}<T_{lb}^{fs}, b_c^i<N_{th}^i$; $T_{no}>T_{lb}^{fs}, b_c^i<N_{th}^i$; $b_c^i\geq N_{th}^i$.
\subsection*{\textbf{Case 1 (}$\mathbf{T_{no}<T_{lb}^{fs}, b_c^i=k<N_{th}^i}$\textbf{)}}
By following similar steps as used for calculating eq. (\ref{num}), 
$p_{on,on,k}^{on,ds,j}$, $p_{on,on,k}^{on,fs,j}$, and $p_{on,on,k}^{on,on,j}$ is given by eq. (\ref{ondsc}).
\begin{align}
\nonumber p_{on,on,k}^{on,ds,j}&=\frac{Pr\{\xoverline[.5]{E_2},E_{ds}|Y_j\}}{Pr\{\xoverline[.5]{E_2}\}}\alpha_{j}^{\lambda T_{no}}\label{ondsc},\\\nonumber p_{on,on,k}^{on,fs,j}&=\frac{Pr\{\xoverline[.5]{E_2},\xoverline[.5]{E_{ds}}E_{fs}|Y_{j}\}}{Pr\{\xoverline[.5]{E_2}\}}{\alpha_{j}^{\lambda T_{no}}},\\ p_{on,on,k}^{on,on,j}&=\frac{Pr\{\xoverline[.5]{E_2},\xoverline[.5]{E_{fs}}|Y_j\}}{Pr\{\xoverline[.5]{E_2}\}}\alpha_{j}^{\lambda T_{no}}
\end{align}
$Pr\{\xoverline[.5]{E_2}\}$ is calculated by using \thref{l4}. 
All other probability terms of eq. (\ref{ondsc}) are calculated by using \thref{l8}\textendash\thref{l10}. For proving  \thref{l8}\textendash\thref{l10}, we assume that $l$, $m_1$ and $m-m_1$ number of packets arrives over three consecutive time intervals $T_1=T_{pc}-T_{no}$, $T_2=T_{no}+T_{lb}^{fs}-T_{pc}$ and $T_3=T_{lb}^{ds}-T_{lb}^{fs}$ respectively as shown in C of Fig. \ref{DTMC}(b). The event $Y_j$ ensures  $A^{i,T_{pc}}=j+l$, $A_p^{i,T_{lb}^{fs}}=l+m_1$ and $A_p^{i,T_{lb}^{ds}}=l+m$.  
\begin{lemma}\thlabel{l8}
	$Pr\{\xoverline[.5]{E_2},E_{ds}|Y_j\}=\sum_{l=\max(0,N_{th}^i-k-j)}^{N_{th}^i-j-1}\sum_{m=0}^{N_{th}^i-j-l-1}\alpha_{l}^{\lambda T_1}\alpha_{m}^{\lambda (T_2+T_3)}$
\end{lemma}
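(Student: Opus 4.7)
The plan is to parallel the dissection already carried out in \thref{l2} and \thref{l3}, but now applied to the three consecutive windows $T_1$, $T_2$, $T_3$ depicted in panel C of Fig. \ref{DTMC}(b). First I would fix the event $Y_j$, i.e., exactly $j$ Poisson arrivals in the interval $[t,t_1]$ of length $T_{no}$, and partition the additional time needed to evaluate the two prediction events into three disjoint slots: a slot of length $T_1=T_{pc}-T_{no}$ sitting between $t_1$ and $t+T_{lb}^{fs}$ (where the prediction made at $t$ over $T_{pc}=T_{lb}^{fs}$ overlaps the prediction made at $t_1$), a slot of length $T_2=T_{no}+T_{lb}^{fs}-T_{pc}$ that extends the $t_1$-prediction of length $T_{lb}^{fs}$ beyond $t+T_{lb}^{fs}$, and a slot of length $T_3=T_{lb}^{ds}-T_{lb}^{fs}$ that extends it further to cover $T_{lb}^{ds}$. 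Let $l$ denote the number of arrivals in the first slot and $m$ the total in the second and third slots combined.

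Next I would translate the two events into constraints on $l$ and $m$ using the ideal-prediction assumption, under which predicted counts coincide with actual Poisson counts. The event $\xoverline[.5]{E_2}$ demands $A_p^{i,T_{lb}^{fs}}\geq N_{th}^i-k$; since the prediction window at $t$ decomposes into the $Y_j$ interval and the slot of length $T_1$, this becomes $j+l\geq N_{th}^i-k$, giving $l\geq \max(0,N_{th}^i-k-j)$ once the natural bound $l\geq 0$ is imposed. The event $E_{ds}$ demands $A_p^{i,T_{lb}^{ds}}\leq N_{th}^i-j-1$, and the prediction window at $t_1$ decomposes into $T_1,T_2,T_3$, so $l+m\leq N_{th}^i-j-1$, which in turn forces $l\leq N_{th}^i-j-1$ and, for each such $l$, $m\leq N_{th}^i-j-l-1$.

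Finally, because $T_1$ and $T_2\cup T_3$ are disjoint time intervals, the independent-increments property of the Poisson process makes the counts $l$ and $m$ independent, so the joint probability factorizes as $\alpha_l^{\lambda T_1}\alpha_m^{\lambda(T_2+T_3)}$; summing over the admissible pairs $(l,m)$ yields the claimed double sum. The main obstacle I anticipate is purely bookkeeping: correctly locating each prediction window relative to the epochs $t$, $t_1$, $t+T_{lb}^{fs}$, $t_1+T_{lb}^{ds}$ so that the three slots are genuinely disjoint, and noticing that the intermediate split $m_1$ of $m$ between $T_2$ and $T_3$ (used in the figure's labeling) never needs to appear in the final expression, because only the aggregate $l+m$ enters $E_{ds}$ and the Poisson counts in $T_2$ and $T_3$ can be collapsed into a single count over $T_2+T_3$.
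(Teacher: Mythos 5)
Your proposal is correct and follows essentially the same route as the paper's proof: decompose the prediction windows into the disjoint slots $T_1$ and $T_2\cup T_3$, translate $\xoverline[.5]{E_2}$ and $E_{ds}$ into the bounds $l\geq\max(0,N_{th}^i-k-j)$, $l\leq N_{th}^i-j-1$, $m\leq N_{th}^i-j-l-1$, and use independent Poisson increments to factorize. Your added remark that the split $m_1$ of $m$ between $T_2$ and $T_3$ collapses because only $l+m$ enters $E_{ds}$ is a correct (and slightly more explicit) justification of what the paper does implicitly.
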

\begin{proof}
	 The events $\xoverline[.5]{E_2}$ and $E_{ds}$ requires $l\geq N_{th}^i-k-j$ and $m\leq N_{th}^i-j-l-1$. We know both $l$ and $m$ are positive. $m>0$ upper bounds the values of $l$ by $N_{th}^i-j-1$. Thus, the event $\xoverline[.5]{E_2},E_{ds}|Y_j$ requires arrival of $l$ and $m$ number of packets over the time interval $T_1$ and $T_2+T_3$ with all restrictions on $l$ and $m$ as discussed above which proves \thref{l8}.
\end{proof}
\begin{lemma}\thlabel{l9}
	$Pr\{\xoverline[.5]{E_2},\xoverline[.5]{E_{ds}}E_{fs}|Y_j\}$\begin{align*}
	=\sum_{l=\max(0,N_{th}^i-k-j)}^{N_{th}^i-j-1}\sum_{m_1=0}^{N_{th}^i-j-l-1}\sum_{m=N_{th}^i-j-l}^{\infty}\alpha_{l}^{\lambda T_1}\alpha_{m_1}^{\lambda T_2} \alpha_{m-m_1}^{\lambda T_3}
	\end{align*}
\end{lemma}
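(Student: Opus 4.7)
The plan is to mimic the interval-decomposition argument already used in \thref{l8}, but now tracking two prediction windows opened at the next observation instant (one of length $T_{lb}^{fs}$ and one of length $T_{lb}^{ds}$) rather than only the $T_{lb}^{ds}$ window. Because of the ideal-prediction assumption every $A_p^{i,T}$ can be replaced by the actual arrival count $A^{i,T}$, so the four events appearing in $Pr\{\xoverline[0.5]{E_2},\xoverline[0.5]{E_{ds}}E_{fs}|Y_j\}$ all become statements about a single underlying Poisson arrival process on the window $[t,\,t_1+T_{lb}^{ds}]$.

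First I would reuse the three consecutive, disjoint subintervals $T_1=T_{pc}-T_{no}$, $T_2=T_{no}+T_{lb}^{fs}-T_{pc}$, and $T_3=T_{lb}^{ds}-T_{lb}^{fs}$ of region C of Fig.~\ref{DTMC}(b), and let $l$, $m_1$, $m-m_1$ denote the arrival counts over them. Conditioning on $Y_j$ (that is, $A^{i,T_{no}}=j$) then reproduces the identities $A^{i,T_{pc}}=j+l$, $A_p^{i,T_{lb}^{fs}}=l+m_1$, and $A_p^{i,T_{lb}^{ds}}=l+m$ stated in the preamble to \thref{l8}.

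Next I would translate each remaining event into a linear inequality on $(l,m_1,m)$: $\xoverline[0.5]{E_2}$ becomes $l\geq N_{th}^i-k-j$, $E_{fs}$ becomes $m_1\leq N_{th}^i-j-l-1$, and $\xoverline[0.5]{E_{ds}}$ becomes $m\geq N_{th}^i-j-l$. Combining these with the non-negativity constraints $l\geq 0$ and $m_1\geq 0$ (the latter forcing $l\leq N_{th}^i-j-1$ so that at least one admissible $m_1$ exists) pins down the summation region $l\in[\max(0,N_{th}^i-k-j),\,N_{th}^i-j-1]$, $m_1\in[0,\,N_{th}^i-j-l-1]$, $m\in[N_{th}^i-j-l,\,\infty)$.

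Finally, by the independent-increments property of the Poisson process on disjoint subintervals, the joint probability of observing $l$, $m_1$, and $m-m_1$ arrivals over $T_1$, $T_2$, $T_3$ factorizes as $\alpha_l^{\lambda T_1}\alpha_{m_1}^{\lambda T_2}\alpha_{m-m_1}^{\lambda T_3}$; summing over the admissible region reproduces the claimed triple sum. The only real obstacle is the bookkeeping — in particular, justifying the lower limit $\max(0,N_{th}^i-k-j)$, which compactly records that the $\xoverline[0.5]{E_2}$ constraint on $l$ is binding only when $j<N_{th}^i-k$ and is otherwise subsumed by $l\geq 0$. Once this is handled, the remainder is the same routine Poisson factorization already executed in \thref{l8}.
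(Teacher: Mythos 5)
Your proposal is correct and follows essentially the same route as the paper: the same decomposition into the disjoint intervals $T_1$, $T_2$, $T_3$ from region C of Fig.~\ref{DTMC}(b), the same translation of $\xoverline[.5]{E_2}$, $E_{fs}$, $\xoverline[.5]{E_{ds}}$ into the bounds $l\geq N_{th}^i-k-j$, $m_1\leq N_{th}^i-j-l-1$, $m\geq N_{th}^i-j-l$, the same use of $m_1\geq 0$ to cap $l$ at $N_{th}^i-j-1$, and the same Poisson independent-increments factorization. Your explicit justification of the $\max(0,N_{th}^i-k-j)$ lower limit is a small clarity improvement over the paper's terser wording, but the argument is identical in substance.
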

\begin{proof}
	The events $\xoverline[.5]{E_2}$, $E_{fs}$ and $\xoverline[.5]{E_{ds}}$ requires $l\geq N_{th}^i-k-j$, $m_1\leq N_{th}^i-j-l-1$ and $m\geq N_{th}^i-j-l$ respectively. Thus, for all $m$ and $m_1$, $m\geq m_1$. Further, both $l$ and $m_1$ are non-negative. $m_1\geq 0$ necessitates $l\leq N_{th}^i-j-1$. 
	Thus, the event $\xoverline[.5]{E_2},\xoverline[.5]{E_{ds}}E_{fs}|Y_j$ occurs if $l$, $m_1$ and $m-m_1$ number of packets over the time interval $T_1$, $T_2$ and $T_3$ respectively with all restrictions on $l$, $m_1$ and $m$ as discussed above which proves \thref{l9}. 
\end{proof}
\begin{lemma}\thlabel{l10}
	$Pr\{\xoverline[.5]{E_2},\xoverline[.5]{E_{fs}}|Y_j\}=\sum_{l=\max(0,N_{th}^i-k-j)}^{\infty}\sum_{m_1=\max(0,N_{th}^i-l-j)}^{\infty}\alpha_{l}^{\lambda T_1}\alpha_{m_1}^{\lambda T_2} $
\end{lemma}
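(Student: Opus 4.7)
The plan is to follow the same decomposition used in the proofs of \thref{l8} and \thref{l9}, adapted to the complementary event $\xoverline[.5]{E_{fs}}$ in place of $E_{fs}$. First I would partition the relevant time axis after the current observation point $t$ into three disjoint sub-intervals, as illustrated in C of Fig.~\ref{DTMC}(b): $T_1=T_{pc}-T_{no}$ (from $t_1$ up to $t+T_{pc}$), $T_2=T_{no}+T_{lb}^{fs}-T_{pc}$, and $T_3=T_{lb}^{ds}-T_{lb}^{fs}$. Let $l$ and $m_1$ denote the numbers of Poisson arrivals over $T_1$ and $T_2$ respectively; since the event $\xoverline[.5]{E_{fs}}$ only constrains arrivals up to $t_1+T_{lb}^{fs}$, the sub-interval $T_3$ will not feature in the final expression.

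Next, I would translate the two conditioning events into inequalities on $(l,m_1)$ using the ideal-prediction assumption. Given $Y_j$, the actual-equals-predicted identity yields $A_p^{i,T_{pc}}=j+l$ and $A_p^{i,T_{lb}^{fs}}=l+m_1$. Hence $\xoverline[.5]{E_2}\colon A_p^{i,T_{pc}}\geq N_{th}^i-k$ is equivalent to $l\geq N_{th}^i-k-j$, and combined with $l\geq 0$ this produces the lower limit $l\geq\max(0,N_{th}^i-k-j)$. Similarly, $\xoverline[.5]{E_{fs}}\colon A_p^{i,T_{lb}^{fs}}\geq N_{th}^i-j$ becomes $m_1\geq N_{th}^i-l-j$, which coupled with $m_1\geq 0$ yields $m_1\geq\max(0,N_{th}^i-l-j)$.

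Finally I would invoke the independent-increments property of the Poisson arrival stream: the intervals of length $T_1$ and $T_2$ are disjoint from the interval of length $T_{no}$ on which $Y_j$ is measured, so conditioning on $Y_j$ is immaterial for arrivals over $T_1\cup T_2$, and the joint conditional probability factorises as $\alpha_l^{\lambda T_1}\alpha_{m_1}^{\lambda T_2}$. Summing over all admissible pairs $(l,m_1)$ gives the stated double sum, with both upper limits equal to $\infty$ reflecting the unbounded range inherited from the complementation.

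I do not foresee a genuine obstacle: the argument is a routine adaptation of \thref{l9} with the inner finite sum replaced by its tail. The only points requiring a little care are (i) taking the $\max$ with zero in each lower limit to avoid summing over negative indices, and (ii) noting explicitly that any marginal Poisson mass over $T_3$ sums to unity and therefore silently disappears, which is what makes the expression for Lemma~10 simpler than that for Lemma~9.
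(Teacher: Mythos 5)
Your proposal is correct and follows essentially the same route as the paper: the same decomposition into the consecutive intervals $T_1$, $T_2$, $T_3$ with $A_p^{i,T_{pc}}=j+l$ and $A_p^{i,T_{lb}^{fs}}=l+m_1$, the same translation of $\xoverline[.5]{E_2}$ and $\xoverline[.5]{E_{fs}}$ into the lower bounds $l\geq N_{th}^i-k-j$ and $m_1\geq N_{th}^i-j-l$ combined with nonnegativity, and factorisation via independent Poisson increments. The paper's proof is just a terser version of the same argument; your explicit remarks about the $\max$ with zero and the marginalisation over $T_3$ are welcome additional detail.
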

\begin{proof}
	The events $\xoverline[.5]{E_2}$ and $\xoverline[.5]{E_{fs}}$ ensures $l\geq N_{th}^i-k-j$ and $m_1\geq N_{th}^i-l-j$ respectively. Further, both $l$ and $m_1$ are positive. By using this, \thref{l10} can be easily proved.
\end{proof}
\subsubsection*{\textbf{Case 2 (}$\mathbf{t_{no}\geq t_{pp}, b_c^i<N_{th}^i}$\textbf{)}}
 This case is exactly same to the case $S_p^i=S_m \in \{ds,fs\}$, $S_c^i=on$ (refer {Section} \ref{ssec:Smon}).
  \subsubsection*{\textbf{Case 3 (}$\mathbf{b_c^i=k\geq N_{th}^i}$\textbf{)}}
  In this case, $ONU_i$ enters into the ON state without any prediction. Further, it can be easily noticed that in this case, the arrival process over $T_{no}$ duration is independent of the arrival process over $T_{lb}^{ds}$ or $T_{lb}^{fs}$. Hence, in this case, state transition probabilities $p_{on,on,k}^{on,ds,j}$, $p_{on,on,k}^{on,fs,j}$, $p_{on,on,k}^{on,on,j}$ ($j<N_{sz}^i$), and $p_{on,on,k}^{on,on,N_{sz}^i}$ is given by \ref{ky}.
  \begin{align}\label{ky}
  \nonumber
  p_{on,on,k}^{on,ds,j}&=\alpha_{j}^{\lambda T_{no}} \sum_{l=0}^{N_{th}^i-j-1}\alpha_l^{\lambda T_{lb}^{ds}},\\\nonumber
 p_{on,on,k}^{on,fs,j}&=\alpha_{j}^{\lambda T_{no}} \sum_{l=0}^{N_{th}^i-j-1}\alpha_l^{\lambda T_{lb}^{fs}}\sum_{m=N_{th}^i-j-l}^{\infty}\alpha_m^{\lambda T_3},\\\nonumber
 p_{on,on,k}^{on,on,j}&=\alpha_{j}^{\lambda T_{no}} \sum_{l=\max (0,N_{th}^i-j)}^{\infty}\alpha_l^{\lambda T_{lb}^{fs}},\\
p_{on,on,k}^{on,ds,N_{sz}^i}&=\sum_{l=N_{sz}^i}^{\infty}\alpha_{l}^{\lambda T_{no}}
\end{align}
    \begin{figure*}[t]
	\centering
	\includegraphics[scale=.55]{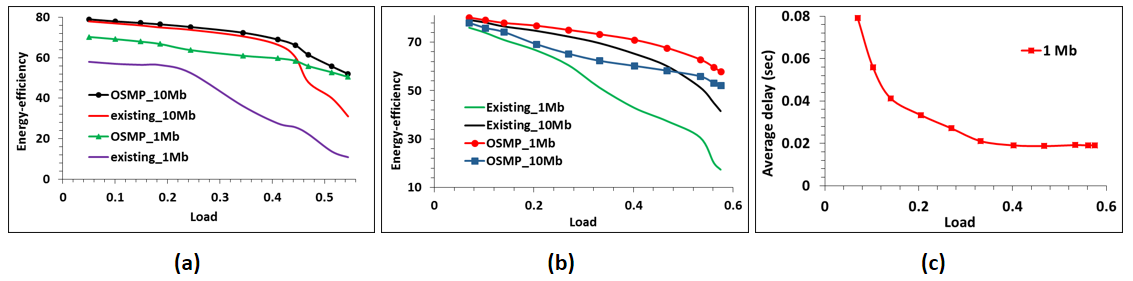}
	\caption{Comparison with existing protocols for scenarios: (a) only one ONU (say $ONU_i$) (b) all ONUs follows OSMP-EO protocol, (c) delay characteristic of OSMP-EO.}
	\label{fig:comp}
\end{figure*}
 \section{Calculation of State Probabilities} 
 Here, we calculate all state probabilities. In OSMP-EO, transition between two sleep modes is not possible. Further, $S_p^i$ of the next state is exactly same as $S_c^i$ of the current state (refer {Section} \ref{ssec:stated}). If $S_c^i=S_m \in\{ds,fs\}$ then the up-stream transmission gets terminated, eliminating the possibility of state transition between states with $b_c^i=k$ to states with $b_c^i<k$. By considering these restrictions on state transitions,    
 the probabilities of being in states $\{S_m, S_c^i , j\}$ ($\pi_{S_m,S_c^i,j}$)  and $\{on, S_c^i,j\}$ ($\pi_{on,S_c^i,j}$) are given by eq. (\ref{key1}) and eq. (\ref{key2}) respectively (refer Fig. \ref{DTMC}(a)).
 \begin{align}\label{key1} 
 \pi_{S_m,S_c^i,j}=\sum_{k=0}^{j}\pi_{S_m,S_m,k} p_{S_m,S_m,k}^{S_m,S_c^i,j}+\sum_{k=0}^{j}\pi_{on,S_m,k} p_{on,S_m,k}^{S_m,S_c^i,j}\\\nonumber\label{key2}\pi_{on, S_c^i,j}=\sum_{k=0}^{N_{th}^i-1}\pi_{ds,on,j} p_{ds,on,j}^{on, S_c^i,j}+\sum_{k=0}^{N_{th}^i-1}\pi_{fs,on,j} p_{fs,on,j}^{on, S_c^i,j}\\+\sum_{k=0}^{N_{sz}^i}\pi_{on,on,j} p_{on,on,j}^{on, S_c^i,j}
 \end{align}
 \section{Calculation of average energy-efficiency}
 In this {Section}, we analyze the average energy savings figure of the OSMP-EO protocol. $ONU_i$ retains the mode of the current observation instant ($S_c^i$) till the next observation instant {i.e.} for a duration of $T_{no}$. We denote $T_{no}$ and the energy consumption of $ONU_i$ over $T_{no}$ for state $\{S_p^i,S_c^i,b_c^i\}$ as $T^{S_p^i,S_c^i,b_c^i}$ and $E^{S_p^i,S_c^i,b_c^i}$ respectively. If $S_c^i=S_m\in\{ds,fs\}$ then $T_{no}=T_m$ and power consumption is $P_{S_m}$ (refer Section \ref{ssec:stated}) and hence, $E^{S_p^i,S_m,b_c^i}=P_{S_m}T_m$.
In case of $S_p^i=S_m$ and $S_c^i=on$, $T_{no}=T^{S_m,on,b_c^i}=T_{sw}^{S_m}+\big\lceil \frac{N_{th}^i}{N_m^i}\big\rceil T_{cm}+\frac{T_{cm}}{2}$. Among this  period $ONU_i$ takes $T_{sw}^{S_m}$ duration to wake-up from sleep mode when power consumption is $P_{on}$. After waking up, $ONU_i$ requires $1.5T_{cm}$ duration to start the up-stream data transmission when only one REPORT message is up-streamed. The rest of the period ($t_{rp}=\lceil N_{th}^i/N^i_{m} \rceil-1$) is used for up-stream data transmission when the power consumption is $P_{on}^{avg,i}$ (refer eq. (\ref{avgac})). Thus, in this case, $E^{S_m,on,b_c^i}=T_{sw}^{S_m}P_{on}+(1.5T_{cm}-T_{sw}^{dz}-T_R-T_G)P_{dz}+(T_{sw}^{dz}-T_R-T_G)P_{on}+t_{rp}P_{on}^{avg,i}$. In case of $S_p^i=S_c^i=on$, over entire $T_{no}=T^{on,on,b_c^i}=\big\lceil\frac{b_i^c}{N_m^i}\big\rceil T_{cm}$ duration power consumption is $P_{on}^{avg,i}$ and hence, $E^{on,on,b_c^i}=T_{no}P_{on}^{avg,i}$. 
Therefore, the average power consumption ($P_{avg}$) of the OSMP-EO algorithm is given by eq. (\ref{ty}).
 \begin{align}\label{ty}
 P_{avg}=\frac{\sum_{S_p^i}\sum_{S_c^i}\sum_{j}E^{S_p^i,S_c^i,j}\pi_{S_p^i,S_c^i,j}}{\sum_{S_p^i}\sum_{S_c^i}\sum_{j}T^{S_p^i,S_c^i,j}\pi_{S_p^i,S_c^i,j}}
 \end{align}
 However, if no energy-efficient mode is employed then  power consumption is always $P_{on}$. Hence, the average energy-efficiency of OSMP-EO algorithm is given by: $\eta_{avg}=1-\dfrac{P_{avg}}{P_{on}}$.
  \section{Result and discussion}
  In this {Section}, firstly, the average energy-efficiency figures of the OSMP-EO protocol are compared with the same of  other existing schemes. We then validate the analytical results with the results obtained from simulations. We next compare the results obtained by using our analytical model with that generated by simulations for self-similar traffic. {All simulations are performed in OMNET++ for a network runtime of $50s$ and the results are plotted with $95\%$  confidence interval.} 
  The link rate of the feeder fiber, the maximum traffic arrival rate at each ONU, the packet size and $T_m$ are assumed to be $1Gbps$, $100Mbps$, $1500Bytes$ and $0.5ms$ respectively \cite{chayan}. {Sleep-to-wake-up time of $ds$, $fs$, and $dz$ are considered to be $5.125ms$, $125\mu s$ and $1\mu s$ respectively \cite{chayan}. Whereas power consumption of $ds$, $fs$, $dz$ and $on$ are $0.75W$, $1.28W$, $2.39W$ and $3.984W$ respectively \cite{chayan}. } 
\begin{figure*}[t]
	\centering
	\includegraphics[scale=.6]{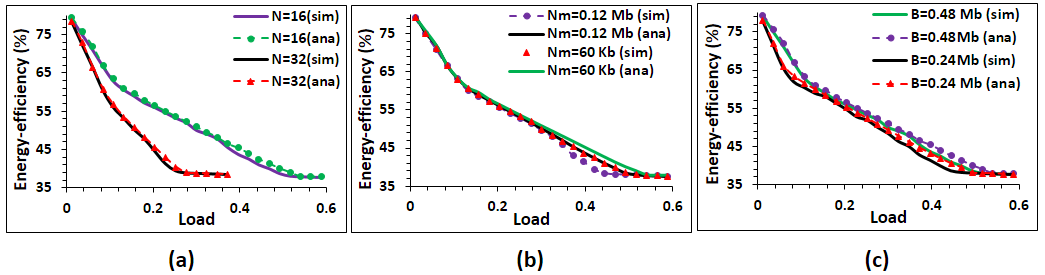}
	\caption{Validation of analytical model for different values of (a) $N$ (b) $N_m^i$ (c) $N_{th}^i$. Nm-$N_m^i$, B-$N_{th}^i$}
	\label{fig:test}
\end{figure*}  
\subsection{Comparison with existing literature}\label{sec:compari}
  {We have already demonstrated in \cite{chayan} that our previous proposal outperforms all existing protocols. Here, we compare the OSMP-EO protocol with our previous proposal where no sleep mode was employed during active periods.} 
  In OSMP-EO, we introduce a mechanism of employing doze mode during the active periods which provides additional energy savings. {In order to quantify this improvement, in Fig. \ref{fig:comp}(a) and Fig. \ref{fig:comp}(b), the average energy-efficiency figures are plotted as a function of traffic load for $N_{th}^i=N_{sz}^i=1Mb ~\text{and } 10Mb$ for two different sceneries: (i) only one ONU (say $ONU_i$) follows OSMP-EO protocol while all other ONUs are traditional ONUs and (ii) all ONUs follows OSMP-EO protocol respectively. The average delay of OSMP-EO protocol for $N_{th}^i=N_{sz}^i=1$ Mb is plotted if Fig. \ref{fig:comp}(c).} Here, traffic load of an ONU is defined with respect to the maximum arrival rate to that ONU. The traffic arrivals are considered to be self-similar with Hurst parameter ($H$) $0.8$ and it is generated by aggregating $16$ ON-OFF Pareto sources. Traffic prediction is performed by using ARMA(2,2) model in a similar manner as discussed in our previous proposal \cite{chayan}. {Fig. \ref{fig:comp}(a) and Fig. \ref{fig:comp}(b) depict that at low load, especially for the case of $N_{th}^i=10$ Mb, the improvement in energy efficiency,  as compared to our previous proposal \cite{chayan} is very small whereas at high load a significant energy benefit ($\sim 40\%$ for $N_{th}^i=1Mb$) can be achieved. This is due to the following fact. Both the reduction of traffic load and the increment of $N_{th}^i$ enhances the buffer fill-up time. As a result, the sleep duration increases and hence, ONUs wake-up from sleep mode less frequently. In OSMP-EO, a certain time interval is wasted when an ONU wakes-up from sleep mode and this wastage can be reduced by increasing of $N_{th}^i$ or by reducing traffic load. Further, the reduction of traffic load decreases the time interval over which an ONU up-streams. As a result of these two facts, the total active period gets reduced with the reduction of traffic load and increment of $N_{th}^i$.  Since OSMP-EO enhances energy efficiency by introducing doze mode during active periods; at a low load, for s higher value of $N_{th}^i$ (when the active period is very small), the increment is negligible. Whereas, at a high load with low value of $N_{th}^i$, the improvement is significant.}  Further, since the increment of traffic load reduces the sleep duration, an increase of traffic load results in a decrement of average delay as seen in \ref{fig:comp}(c).     

\subsection{Validation of analytical model with simulations}  \label{sec:validation} 
Here, the proposed Markov model of the OSMP-EO protocol is validated with simulation results for Poisson traffic. It is well known that in case of Poisson traffic, the current arrival is independent of the previous arrivals, and hence, traffic prediction with the knowledge of previous arrivals is impossible. In  simulations, we consider that $ONU_i$ predicts the number of packet arrivals over $T_{od}$ duration by $\lambda_i T_{od}$ where $\lambda_i$ is average packet arrival rate to $ONU_i$. 
 We assume the buffer size $N_{sz}^i$ as $1.2Mb$ while all other parameters are same as they are taken in Section \ref{sec:compari}.
The achieved energy-efficiency figures as a function of traffic load along with the same obtained from our analytical model  for different values of $N$, $N_m^i$ and $N_{th}^i$ are plotted in Fig. \ref{fig:test}(a), Fig. \ref{fig:test}(b) and Fig. \ref{fig:test}(c) respectively. In Fig. \ref{fig:test}(a), we plot the energy-efficiency figures for $N=16$,  $32$ by considering $N_m^i=60Kb$ and $N_{th}^i=0.48Mb$. Whereas in Fig. \ref{fig:test}(b) and Fig. \ref{fig:test}(c), the same is plotted for $N_m^i=0.12Mb$, $60Kb$ with $N=16$, $N_{th}^i=0.48Mb$ and  for $N_{th}^i=0.48Mb$, $0.24Mb$ with $N=16$, $N_m^i=0.12Mb$.  It can be observed from the figures that the analytical results match closely with simulation results for all considered scenarios which validate our analytical model. However, at higher load, the simulation results deviate from the analytical results which are due to prediction error. It is interesting to note that this deviation increases with an increase in traffic load. This is due to the following facts. In simulations, Poisson traffic arrivals are predicted by its mean and hence, the prediction error is proportional to the variance of the Poisson distribution. An increment of traffic load enhances the variance of the Poisson distribution and, hence the prediction error.   

\subsubsection{Effect of $N$}
It can be noticed from Fig. \ref{fig:test}(a) that the achieved energy-efficiency figures reduce drastically with an increase in $N$ from $N=16$ to $N=32$. The reason can be explained as follows. In OSMP-EO, if $ONU_i$ enters into the active mode from a sleep mode, it remains active until it up-streams $N_{th}^i$ number of packets. Since in each cycle maximum of $N_m^i$ number of packets get cleared, $\lceil\frac{N_{th}^i}{N_m^i}\rceil$ number of cycles are required to up-stream $N_{th}^i$ number of packets and over this entire period, $ONU_i$ remains active. 
If the value of $N$ is doubled then the duration of each cycle almost become double for fixed allocation scheme and hence, the duration of active periods of $ONU_i$ almost become double resulting in a drastic reduction of energy-efficiency. 
\begin{figure*}[h]
	\centering
	\includegraphics[scale=.6]{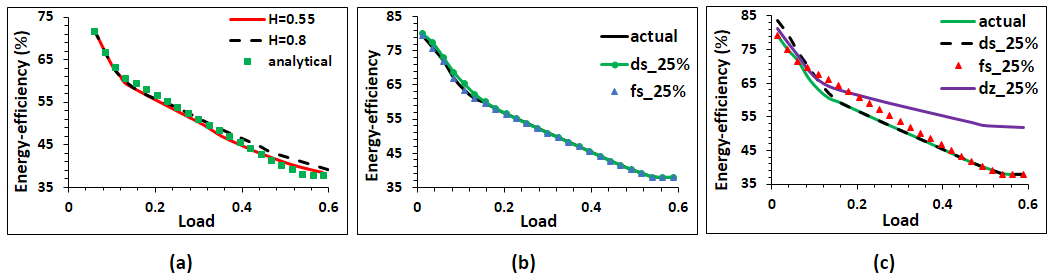}
	\caption{(a) Applicability to self-similar traffic (b) Effect of reducing sleep-to-wake-up time of sleep mode (c) Effect of reducing power consumption of sleep modes.}
	\label{fig:di}
\end{figure*}
\subsubsection{Effect of $N_m^i$}
It can be observed from Fig. \ref{fig:test}(b) that the energy-efficiency figures remain almost same with the variation $N_m^i$ at low load. However, at higher load, a slight decrement in energy-efficiency occurs if $N_m^i$ is incremented. The reason can be explained as follows. If the value of $N_m^i$ is doubled, then the number of cycles over which $ONU_i$ remains in active mode before taking the next decision become half. Whereas, the duration of each cycle become almost double. Thus, the duration of active periods over which the US data is transmitted remains unchanged. However, in OSMP-EO, once an ONU decides to be active from sleep mode, on an average half of a cycle is wasted to receive the next GATE message (refer {Section} \ref{subsubsection:osmp}). 
An increment of $N_m^i$ causes an increase in cycle time and hence, this waste of half of a cycle which results in a reduction of energy-efficiency. Since this waste is present in every occurrence of state transitions from sleep mode to active mode, at low load, when the probability of waking up is very small, the effect of variation of $N_m^i$ is negligible.   
%
\subsubsection{Effect of $N_{th}^i$}
A decrement of $N_{th}^i$ results in a reduction of buffer fill-up time and hence, the probability of state transition from sleep mode to active mode increases. In every occurrence of waking up from $S_m$,  $T_{sw}^{S_m}+T_{cm}/2$ duration is wasted (refer Section \ref{ssec:Smon}). Thus, the decrement of $N_{th}^i$ causes a reduction in energy-efficiency.   
\subsection{Applicability to self similar traffic}
Here, we compare our analytical results with the results, obtained from simulations with self-similar traffic. In order to do so, in Fig. \ref{fig:di}(a), we plot the energy-efficiency figures that are obtained from simulations with self-similar traffic for $H=0.55,~0.8$ and from the analytical model with Poisson traffic as a function of traffic load. Self-similar traffic is generated and predicted in the same way as it is discussed in Section \ref{sec:compari}. We consider $N=16$, $N_m^i=60Kb$ and $N_{th}^i=0.48Mb$. It can be observed that the analytical results match closely with simulation results even for self-similar traffic. However, at high load, the energy-efficiency figures for self-similar traffic is little higher than the analytical results. This is due to the fact that the bursty nature of self-similar traffic increases the packet drop probability and hence, reduces the effective traffic load. The decrement of $H$ reduces the burstiness and wherefore, the packets drop probability. Thus, the results for self-similar traffic approach to the analytical result with the reduction of $H$ which can also be observed from Fig. 
\subsection{Design Insights}
Here, we observe the effect of reducing the sleep-to-wake-up times and the power consumption figures of all sleep modes on the energy-efficiency which can be achieved by designing advance circuitries. For this purpose, energy-efficiency figures that are obtained from the analytical model if the sleep-to-wake-up time and power consumption figures of all sleep modes are reduced individually by $25\%$ are plotted in Fig. \ref{fig:di}(b) and Fig. \ref{fig:di}(c) respectively. It can be noticed from Fig. \ref{fig:di}(b) that the effect of diminishing the sleep-to-wake-up time of all modes are insignificant and hence, makes no sense for a circuit designer to focus on it. Reduction in power consumption of deep sleep and fast sleep increase energy-efficiency at low and moderate load respectively (refer Fig. \ref{fig:di}(c)). This is because, $ONU_i$ enter into the deep sleep if the buffer fill-up time is more than $T_{lb}^{ds}$ which has a significantly high value (refer eq. (\ref{dsv})) and hence, exists only at low load. Similarly, fast sleep exists at moderate load where the effect of reducing $P_{fs}$ is present. However, in OSMP-EO, doze mode reduces the energy consumption of active periods and therefore, the effect of reducing $p_{dz}$ exists in all loads. An increment of traffic load increases the duration of active periods and hence, the enhances the improvement of energy-efficiency. Thus, from the point of view of a design engineer, he should mostly focus on reducing the power consumption figures of the doze mode.
\ref{fig:di}(a). 
\section{Conclusion}
In this paper, we propose an ONU-assisted sleep mode protocol (OSMP-EO) for energy-efficient ONU design in EPON where ONUs are allowed to save energy even during the active cycles. This provides a significant improvement in energy savings especially at high load ($\sim 40\%$) when the energy-efficiency approaches zero in all existing ONU-assisted protocols. A mathematical analysis of the OSMP-EO protocol is also performed. We first prove that this protocol infringes the memoryless property. We then explain that the entire history can be easily captured by an intelligent selection of discrete observation instants and associated state descriptions which allow formulating a DTMC for analyzing energy-efficiency figures. The analytical model is verified with simulations for both Poisson and self-similar traffic. The analysis reveals that the protocol is almost insensitive to sleep-to-wake-up times of all sleep modes. An improvement in energy-efficiency can be achieved at low and moderate load if the power consumption figures can be reduced by designing better circuitries. Further, the energy-efficiency figure of OSMP-EO is highly sensitive to doze mode power consumption figure and hence,  circuit designer should focus on reducing it.         

\ifCLASSOPTIONcaptionsoff
  \newpage
\fi

\bibliographystyle{IEEEtran}

\end{document}